\documentclass[accepted]{uai2025} % after acceptance, for a revised version; 
% also before submission to see how the non-anonymous paper would look like 
                        
%% There is a class option to choose the math font
% \documentclass[mathfont=ptmx]{uai2025} % ptmx math instead of Computer
                                         % Modern (has noticeable issues)
% \documentclass[mathfont=newtx]{uai2025} % newtx fonts (improves upon
                                          % ptmx; less tested, no support)
% NOTE: Only keep *one* line above as appropriate, as it will be replaced
%       automatically for papers to be published. Do not make any other
%       change above this note for an accepted version.

%% Choose your variant of English; be consistent
\usepackage[american]{babel}
% \usepackage[british]{babel}

%% Some suggested packages, as needed:
\usepackage{natbib} % has a nice set of citation styles and commands
    \bibliographystyle{plainnat}
    
\usepackage{mathtools} % amsmath with fixes and additions
\usepackage{booktabs} % commands to create good-looking tables
\usepackage{tikz} % nice language for creating drawings and diagrams

%% Provided macros
% \smaller: Because the class footnote size is essentially LaTeX's \small,
%           redefining \footnotesize, we provide the original \footnotesize
%           using this macro.
%           (Use only sparingly, e.g., in drawings, as it is quite small.)

%% Self-defined macros
 % just an example

% custom loaded packages 
\usepackage[commenters={T}]{shortex}

\title{Tuning-Free Coreset Markov Chain Monte Carlo via Hot DoG}

% The standard author block has changed for UAI 2025 to provide
% more space for long author lists and allow for complex affiliations
%
% All author information is authomatically removed by the class for the
% anonymous submission version of your paper, so you can already add your
% information below.
%
% Add authors

\author[1]{\href{mailto:<naitong.chen@stat.ubc.ca>?Subject=Your UAI 2025 paper}{\color{black}{Naitong~Chen}}{}}
\author[2]{Jonathan~H.~Huggins}
\author[1]{Trevor~Campbell}

% Add affiliations after the authors
\affil[1]{%
    Department of Statistics\\
    University of British Columbia\\
    Vancouver, BC, Canada
}
\affil[2]{%
    Department of Mathematics \& Statistics and Faculty of Computing \& Data Sciences\\
    Boston University\\
    Boston, MA, USA
}
  
  \begin{document}
\maketitle

\begin{abstract}
A Bayesian coreset is a small, weighted subset of a data set that replaces the
full data during inference to reduce computational cost.  The state-of-the-art
coreset construction algorithm, \emph{Coreset Markov chain Monte Carlo}
(Coreset MCMC), uses draws from an adaptive Markov chain targeting the coreset
posterior to train the coreset weights via stochastic gradient optimization.
However, the quality of the constructed coreset, and thus the quality of its
posterior approximation, is sensitive to the stochastic optimization learning
rate.  In this work, we propose a learning-rate-free stochastic gradient
optimization procedure, \emph{Hot-start Distance over Gradient} (Hot DoG),
for training coreset weights in Coreset MCMC without user tuning effort.
We provide a theoretical analysis of the convergence of the coreset weights 
produced by Hot DoG.
We also provide empirical results demonstrate that Hot DoG provides higher 
quality posterior
approximations than other learning-rate-free stochastic gradient methods, and
performs competitively to optimally-tuned ADAM.
\end{abstract}

\section{Introduction}
\label{sec:introduction}

\begin{figure}[h]
    \includegraphics[width=\columnwidth]{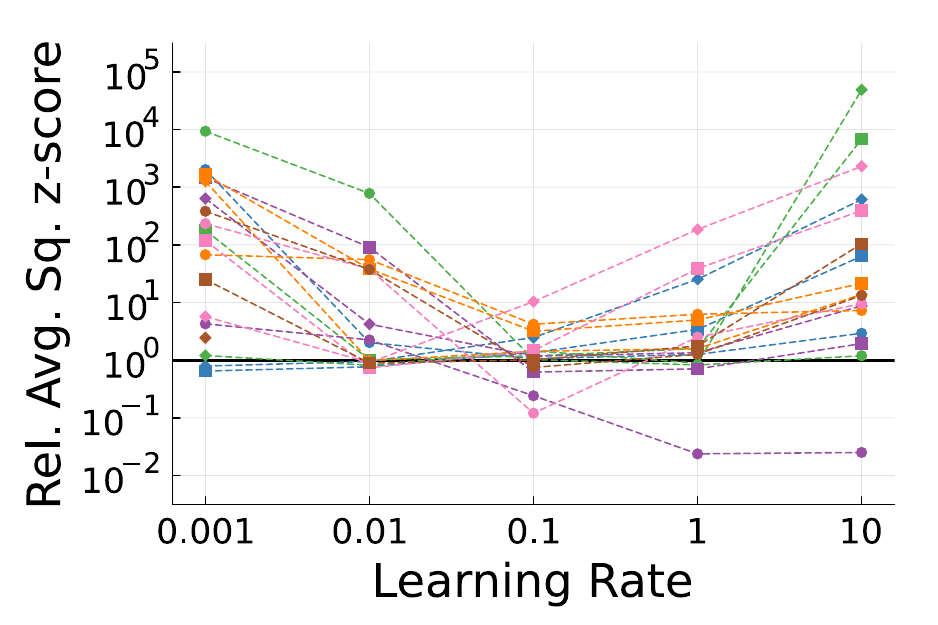}
    \includegraphics[width=\columnwidth]{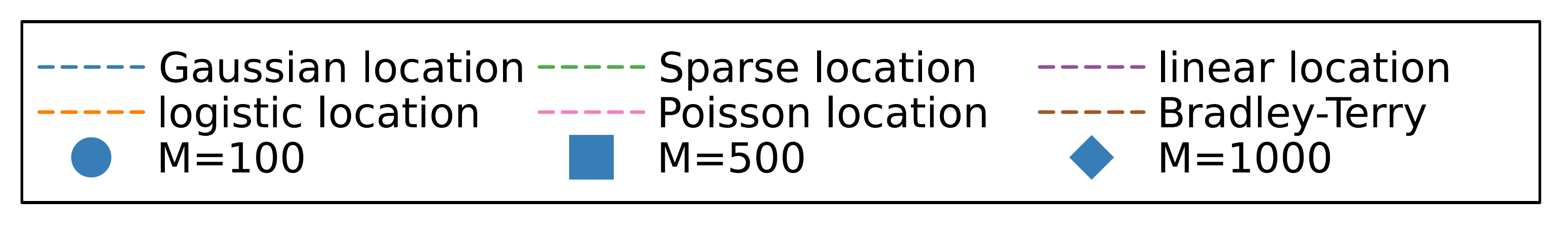}
    \caption{Relative Coreset MCMC posterior approximation error comparing ADAM (with different learning rates) and 
    the proposed Hot DoG method (under our recommended setting).
    The metric plotted is
    the ratio of average squared z-scores (defined in \cref{eq:avg_sq_z}) under ADAM to those under Hot DoG.
    Values above the horizontal black line ($10^0$) indicate that the proposed Hot DoG method outperformed ADAM.
    Median values after 200,000 optimization iterations across 10 trials
    are used for the relative comparison for a variety of datasets, models, and coreset sizes.
    }
    \label{fig:NVDoG_ADAM}
\end{figure}

Bayesian inference provides a flexible framework for parameter estimation and uncertainty quantification in 
statistical models. Markov chain Monte Carlo [\citealp{robert1999monte}; \citealp{robert2011short}; 
\citealp[Chs.~11 and 12]{gelman2013bayesian}], the standard methodology for 
performing Bayesian inference, involves simulating carefully constructed Markov chains whose stationary distribution 
is the target Bayesian posterior. In the large-scale data setting, this procedure can become prohibitively expensive, 
as it requires iterating over the entire data set to simulate the next state. 

\emph{Bayesian coresets} \citep{huggins2016coresets}
are a popular approach for speeding up Bayesian inference in the 
large-scale data setting.
A Bayesian coreset
is a weighted subset of data that replaces the full data set 
during inference, leveraging the insight that large datasets often exhibit a 
significant degree of redundancy.\footnote{A related approach, \emph{data distillation}, constructs a small 
synthetic data set for downstream tasks. However, this approach often requires bespoke methods for non-real-valued data
(see [\citealp[Sec.~3]{sachdeva2023data}]). In contrast, Bayesian coresets do not modify individual data points, 
and so are fully generic.} 
With a carefully constructed coreset, one can significantly reduce the computational cost 
of inference while still obtaining samples from a high quality 
approximation of the full Bayesian posterior. In fact, given a data set of $N$ points, a 
coreset of size $\scO\left(\log N\right)$ is sufficient for providing a near-exact posterior approximation 
in exponential family and other sufficiently simple models [\citealp[Thms.~4.1 and 4.2]{naik2022fast}; \citealp[Prop.~3.1]{chen2022bayesian}]
and $\scO\lt(\operatorname{polylog} N\rt)$ is sufficient for more general cases \citep[Cor.~6.1]{campbell2024general}.

Constructing a coreset involves picking the data points to include in the coreset and assigning each data point its 
corresponding weight. The state-of-the-art method, Coreset MCMC \citep{chen2024coreset}, selects coreset 
points by sampling them uniformly from the full data set, and learns the weights using stochastic gradient optimization techniques, e.g., ADAM \citep{kingma2014adam}, 
where the gradients are estimated using MCMC draws targeting the current coreset posterior. 
However, as we demonstrate in this paper, there are two issues with this approach.
First, the quality of the constructed coreset is sensitive to the learning rate of the 
stochastic optimization algorithm. And second, gradient estimates using MCMC draws
are affected strongly in early iterations by initialization bias, leading to poor 
optimization performance.

To address these challenges, we first propose 
\emph{Hot-start Distance over Gradient} (Hot DoG), a tuning-free stochastic
gradient optimization procedure that can be used for learning coreset weights
in Coreset MCMC. Hot DoG is a stochastic gradient method combining techniques from Do(W)G
\citep{ivgi2023dog,khaled2023dowg}, ADAM \citep{kingma2014adam}, and RMSProp
\citep{hinton2012neural} to set learning rates automatically. Hot DoG also includes an
automated warm-up phase prior to weight optimization, which guards against usage
of low quality MCMC draws when estimating the objective function gradients.
We then analyze the convergence behaviour of Hot DoG in a representative setting.
Empirically, \cref{fig:NVDoG_ADAM} demonstrates that Hot DoG under our recommended setting
performs competitively to optimally-tuned ADAM across a wide range of models, datasets, and coreset sizes, 
and can be multiple orders of magnitude more accurate than ADAM using other learning rates.
Beyond the results shown in \cref{fig:NVDoG_ADAM}, we provide an extensive 
empirical investigation of the reliability of Hot DoG in comparison to other methods across 
various synthetic and real experiments.

\section{Background}
\label{sec:background}

\subsection{Bayesian Coresets}
We are given a data set $(X_n)_{n=1}^N$ of $N$ observations, 
a log-likelihood $\ell_n \coloneqq \log p(x_n \mid \theta)$ 
for observation $n$ given $\theta \in \Theta$, and a prior density $\pi_0(\theta)$. 
We would like to sample from the  Bayesian posterior with density
\[
  \pi(\theta) \coloneqq \frac{1}{Z} \exp\left( \sum_{n=1}^N \ell_n(\theta) \right) \pi_0(\theta),
\]
where $Z$ is the unknown normalizing constant. A Bayesian coreset replaces the sum over $N$ log-likelihood terms with a 
weighted sum over a subset of size $M$, where $M\ll N$. Without loss of generality, we assume that these are the 
first $M$ points. The coreset posterior can then be written as
\[
  \pi_w(\theta) \coloneqq \frac{1}{Z(w)} \exp\left( \sum_{m=1}^M w_m \ell_m(\theta) \right) \pi_0(\theta), 
  \label{eq:coresetposterior}
\]
where $w \in \reals^M_{+}$ is a vector of coreset weights.
Recent coreset construction methods 
uniformly select $M$ points to include in the coreset \citep{naik2022fast,chen2022bayesian,chen2024coreset}, and 
then optimize the weights of those $M$ points as 
a variational inference problem \citep{campbell2019sparse},
\[
    w^\star = \argmin_{w\in\reals^M} \kl{\pi_w}{\pi} \quad \text{s.t.} \quad w \in \mathcal{W}\label{eq:coresetopt},
\]
with objective function gradient
\[
    \label{eq:grad}
    &\nabla_w \kl{\pi_w}{\pi} \\
    = &\Cov_{\pi_w}\lt( \bbmat \ell_1(\theta) \\ \vdots \\ \ell_M(\theta)\ebmat, 
        \sum_m w_m\ell_m(\theta) - \sum_n \ell_n(\theta) \rt).
\] 

\balg[t]
\caption{\texttt{CoresetMCMC}} \label{alg:coresetmcmc}
\balgc
\Require $\theta_0$, $\kappa_w$, $S$, $M$
\LineComment Initialize coreset weights
\State $w_{0m} = \frac{N}{M}, \quad m = 1,\cdots,M$
\For{$t=0, \dots, T$}
    \LineComment Subsample the data
    \State $\scS_{t} \gets \Unif\lt(S, [N]\rt)$ (without replacement)
    \LineComment Compute gradient estimate
    \State $\hat{g}_{t} \gets g(w_{t}, \theta_{t}, \scS_{t})$ (\cref{eq:gradest})
    \State $w_{t+1} \gets $ \texttt{stochastic\_gradient\_step($w_{t}, \hat{g}_{t}$)}
    \LineComment Step each Markov chain
    \For{$k=1, \dots, K$}
        \State $\theta_{(t+1)k} \dist \kappa_{w_{t+1}}(\cdot \mid \theta_{tk})$
    \EndFor
\EndFor
\ealgc
\ealg

\subsection{Coreset MCMC}
The key challenge in solving \cref{eq:coresetopt} is that $\pi_w$ does not admit tractable \iid draws,
and so unbiased estimates of the gradient in \cref{eq:grad} are not readily available.
Coreset MCMC \citep{chen2024coreset} is an adaptive algorithm that addresses this issue.
The method first initializes weights $w_0 \in\reals^M$ and 
$K\geq 2$ samples $\theta_0 = \lt(\theta_{01}, \dots, \theta_{0K}\rt) \in \Theta^K$.
At iteration $t\in\nats$, given coreset weights $w_t$ and samples 
$\theta_t \in \Theta^K$,
it then updates the weights 
$w_t \to w_{t+1}$ using the stochastic gradient estimate based on the draws $\theta_t$,
\[
    \label{eq:gradest}
    &g(w_t, \theta_t, \scS_t) = \\
    &\frac{1}{K\!-\!1}\!\sum_{k=1}^K \!\!\bbmat \bar\ell_1(\theta_{tk})\\ \vdots \\ \bar\ell_M(\theta_{tk})\ebmat
    \!\!\lt(\!\sum_m w_{tm}\bar\ell_m(\theta_{tk}) \!-\! \frac{N}{S}\!\sum_{s\in \scS_t}\!\bar\ell_{s}(\theta_{tk}) \!\rt),
\]
where $\scS_t \subseteq [N]$ is a uniform subsample of indices of size $S$, 
and $\bar\ell_n(\theta_{tk}) = \ell_n(\theta_{tk}) - \frac{1}{K}\sum_{j=1}^K \ell_n(\theta_{tj})$.
To complete the iteration, the method updates the samples by independently drawing  $\theta_{(t+1)k} \dist \kappa_{w_{t+1}}(\theta_{tk}, \cdot)$ for each $k\in [K]$,
where $\kappa_w$ is a family of $\pi_w$-invariant Markov kernels. %with invariant distribution $\pi_w$.
The pseudocode for Coreset MCMC is outlined in \cref{alg:coresetmcmc}.

\begin{figure}[t]
    \includegraphics[width=\columnwidth]{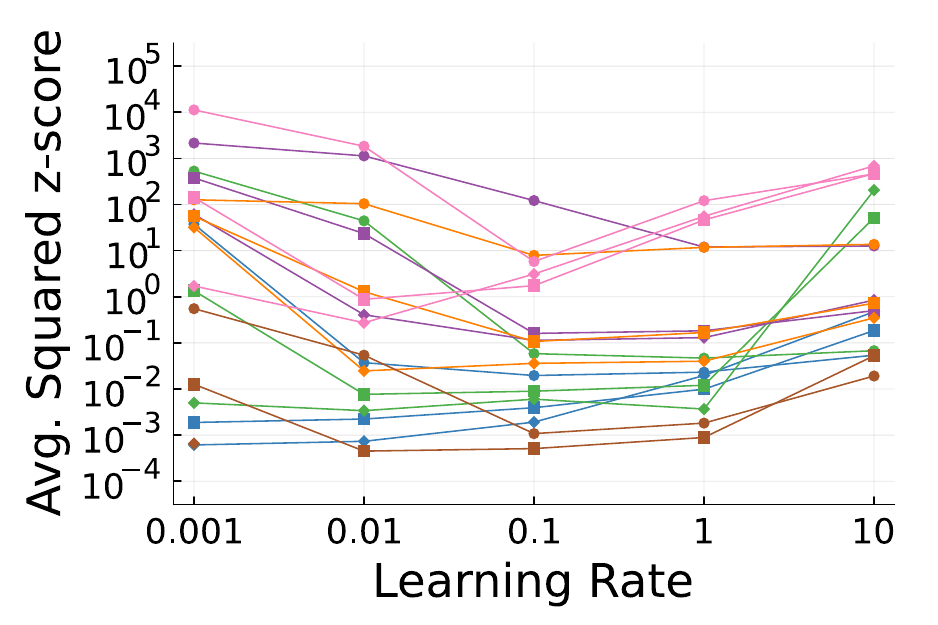}
    \includegraphics[width=\columnwidth]{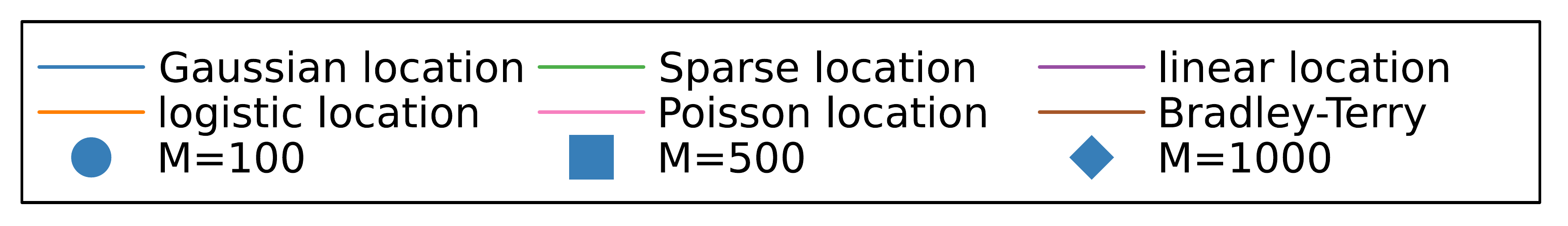}
    \caption{Coreset MCMC posterior approximation error (as defined in \cref{eq:avg_sq_z})
    		using ADAM with different learning rates for a variety of datasets, models, and coreset sizes.
    		The lines indicate median values after 200,000 optimization iterations across 10 trials.}
    \label{fig:ADAM}
\end{figure}

\begin{figure*}[t!]
    \centering{
    \begin{subfigure}{0.45\textwidth}
        \includegraphics[width=\columnwidth]{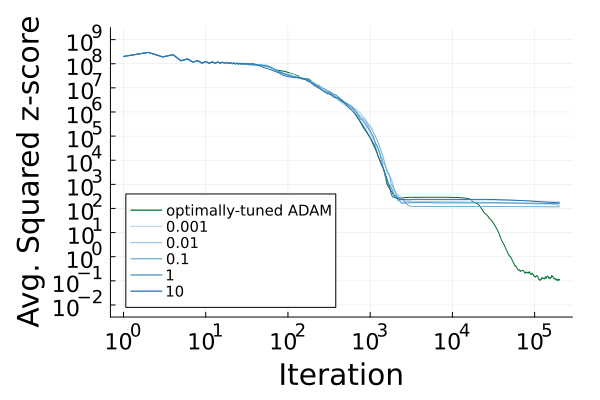}
        \caption{DoG}\label{fig:DoG_nomix}
    \end{subfigure}
    \begin{subfigure}{0.45\textwidth}
        \includegraphics[width=\columnwidth]{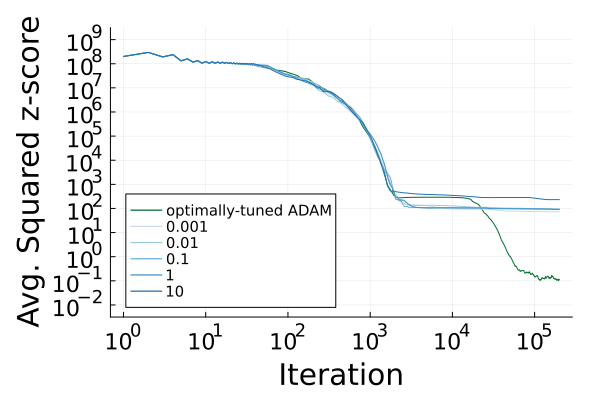}
        \caption{DoWG}\label{fig:DoWG_nomix}
    \end{subfigure}}\\
    \centering{
    \begin{subfigure}{0.45\textwidth}
        \includegraphics[width=\columnwidth]{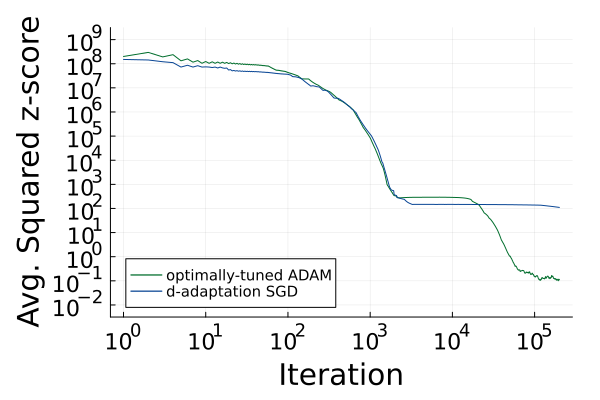}
        \caption{D-Adaptation SGD}\label{fig:dadaptSGD_nomix}
    \end{subfigure}
    \begin{subfigure}{0.45\textwidth}
        \includegraphics[width=\columnwidth]{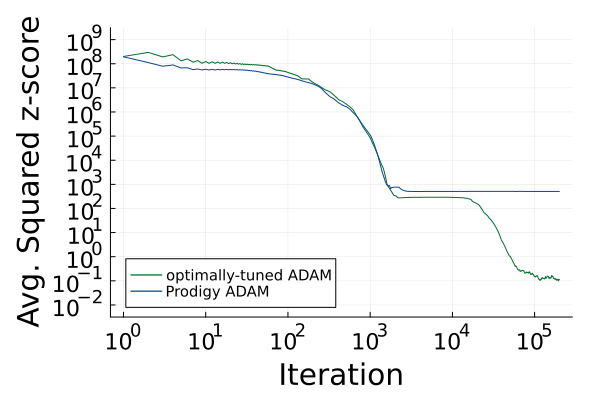}
        \caption{prodigy ADAM}\label{fig:ProdigyADAM_nomix}
    \end{subfigure}}
    \caption{Traces of average squared coordinate-wise z-scores (defined in \cref{eq:avg_sq_z}) between the true and approximated posterior for 
                a Bayesian linear regression example with $M=1{,}000$ coreset points. 
                We evaluate four learning-rate-free SGD methods: 
                DoG and DoWG (with varying initial learning rate parameter), and 
                D-Adaptation SGD and prodigy ADAM (with default initial lower bound $10^{-6}$).
                The optimally-tuned ADAM baseline is shown in green. Results display the median 
                after 200,000 optimization iterations across 10 trials.}
    \label{fig:direct_application}
\end{figure*}

\section{Tuning-Free Coreset MCMC}

A key design choice when using Coreset MCMC is to specify how gradient estimates are used to
optimize the weights. One can use ADAM \citep{kingma2014adam}, which is
used as the default optimizer for Coreset MCMC \citep{chen2024coreset}:
at iteration $t$, with $\gamma_t > 0$ being the user-specified learning rate, we set 
\[
    w_{t+1} \gets \proj_{\geq 0}\lt(w_t - \gamma_t \frac{\hat{m}_t}{\sqrt{\hat{v}_t} + \epsilon}\rt),
\]
where $\hat{m}_t$ and $\hat{v}_t$ are exponential averages of past gradients $(\hat{g}_i)_{i=0}^t$ and their 
element-wise squares, and $\epsilon$ is a small constant.
There are a wide range of other first-order stochastic methods available that could be used (e.g., vanilla stochastic
gradient descent, AdaGrad \citep{duchi2011adaptive}, etc.). However,
like ADAM, most of these algorithms require setting a learning rate $\gamma_t$. And as we show in \cref{fig:ADAM}, the quality 
of samples obtained from Coreset MCMC can be highly sensitive to the selected learning rate. In 
particular, \cref{fig:ADAM} shows that when using ADAM, no 
single learning rate applies well across all problems and coreset sizes; and for a given problem, the performance can vary
by orders of magnitude as one varies the learning rate.
Furthermore, the default ADAM learning rate of $10^{-3}$ \citep{kingma2014adam} 
provides poor results in most of the problems tested. As a result, 
careful tuning of the learning rate is required to obtain high quality posterior approximations.
This usually involves a search on a log-scaled grid, which is computationally wasteful as the results 
for all but one of the parameter values are thrown out. Moreover, in practice determining which learning rate
provides the best posterior approximation may not be straightforward, as we do not have access to estimates of the objective function.

A number of recent works in the literature propose learning-rate-free stochastic optimization methods to address this issue
\citep{carmon2022making,ivgi2023dog,khaled2023dowg,defazio2023learning,mishchenko2023prodigy}. Many of these methods 
are shown empirically to work competitively compared to optimally-tuned SGD on a wide range of large-scale, 
non-convex deep learning problems. Although different at first glance, all of these methods 
arise from the same insight. Suppose one would like to solve the stochastic 
optimization problem
\[
    \min_{w\in\reals^d} \E\left[ f(w,\theta) \right],
\]
where for all $\theta$, $f(\cdot, \theta)$ is convex and we only have access to unbiased stochastic gradient $g_t = \partial f(w_t, \theta_t)$. 
Define the initial distance to the optimal solution $d_0 = \|w_0 - w^\star\|$ 
and the sum of all gradient norms $G_T = \sum_{t\leq T}\|g_t\|^2$.
By setting the SGD learning rate
    $\gamma^\star = \frac{d_0}{\sqrt{G_T}}$,
the average iterate $\bar{w} = \frac{1}{T}\sum_{t\leq T}w_t$ satisfies the optimal error bound 
\[
    \E\left[ f(\bar{w},\theta) \right] - \E\left[ f(w^\star,\theta) \right] \leq \frac{d_0\sqrt{G_T}}{T}
\]
after $T$ iterations \citep{carmon2022making,orabona2020icml}.
Learning-rate-free methods therefore essentially try to estimate or bound the initial distance to the optimal solution $d_0$,
which is unknown in practice. To the best of our knowledge, there are four state-of-the-art methods
that do this in a manner that does not require multiple optimization runs, knowledge of unknown constants,
or the ability to query the objective function:
DoG \citep{ivgi2023dog}, DoWG \citep{khaled2023dowg}, D-Adaptation \citep{defazio2023learning} and prodigy 
\citep{mishchenko2023prodigy}. 
DoG and DoWG run vanilla stochastic gradient descent (SGD),
\[
    w_{t+1} &\gets \proj_{\geq 0}\lt(w_t - \gamma_t g_t\rt),
\]
with learning rate schedules
\[
    \hspace{-.3cm}\gamma_t = \frac{r_t}{\sqrt{G_t}} \text{(DoG)},\,
    \gamma_t = \frac{r^2_t}{\sqrt{\sum_{i\leq t} r_{i}^2 \|g_{i}\|^2}} \text{(DoWG)}, \label{eq:learningrates}
\]
where $r_0$ is set to some small constant
and, for $t \ge 1$, 
\[
r_t = \max_{i\leq t} \|w_t - w_0\|.
\]
For D-Adaptation and prodigy, 
$r_t$ in \cref{eq:learningrates} is replaced with a lower bound $d_t$ on $d_0$,
which is updated using estimated correlations between the $g_t$ and 
step direction $w_0-w_t$:
\[
    d_{t+1} = \max\left\{ \frac{\sum_{i=0}^{t} d_i \left\langle g_i, w_0-w_i \right\rangle}
                                { \left\| \sum_{i=0}^t d_ig_i \right\|}, d_{t} \right\}.
\]
D-Adaptation replaces $r_t$ in \cref{eq:learningrates} (DoG) with $d_t$, while prodigy replaces $r_t$ in \cref{eq:learningrates} (DoWG)
with $d_t$. Both D-Adaptation and prodigy have SGD and ADAM-based variants.
All four methods have been shown empirically to match the performance of 
optimally-tuned SGD.

\cref{fig:direct_application} shows the results from direct applications of DoG, DoWG, D-Adaptation (SGD), and prodigy 
(ADAM) to Coreset MCMC. We see that the quality of posterior approximation from all of four methods are orders of 
magnitude worse than optimally-tuned ADAM. With $\theta_0$ initialized far away from high density regions of 
$\pi_{w_0}$, the initial gradient estimates are large in magnitude, which leads to small learning rates. The 
accumulation of these large gradient norms in the learning rate denominator eventually causes the learning rate to 
vanish, halting the progress of coreset weight optimization. We address these problems in the 
next section.

Before concluding this section, we note that there are other approaches for making SGD free of learning rate 
tuning: some methods involve using stochastic versions of line search 
\citep{vaswani2019painless,paquette2020stochastic}, and others do the same for the Polyak step size 
\citep{loizou2021stochastic}. These methods are not applicable in our setting as they require evaluating the 
objective function. Recall that due to the unknown $Z(w)$ term in \cref{eq:coresetposterior}, we do not have access 
to estimates of the objective function.

\balg[t]
\caption{\texttt{HotDoG}} \label{alg:NVDoG}
\balgc

\Require $\beta_1 = 0.9$, $\beta_2 = 0.999$, $\epsilon = 10^{-8}$, $r = 10^{-3}$\\
$\quad\quad\quad\quad T$, $\theta_0$, $w_0$
\State $v_0 \gets \bm{0}$, $m_0 \gets \bm{0}$, $d_0 \gets \bm{0}$, $c \gets 0$, $h \gets \texttt{false}$
\For{$t=1, \dots, T$}
    \If{h} 
        \State $c\gets c+1$
        \State $\scS_{t} \gets \Unif\lt(S, [N]\rt)$ (without replacement)
        \State $\hat{g}_t = g(w_{t-1}, \theta_{t-1}, \scS_t)$ (\cref{eq:gradest})
        
        \State $v_t \gets \beta_2 v_{t-1} + (1-\beta_2) \hat{g}_t^2$
        \State $m_t \gets \beta_1 m_{t-1} + (1-\beta_1) \hat{g}_t$
        \State $d_t \!\gets\! \beta_1 d_{t-1} \!+\! (1\!-\!\beta_1) \max\left\{ \left| w_{t-1} \!-\! w_0 \right|, d_{t-1} \right\}$ 
        \State $\hat{v}_t \gets v_t / (1 - \beta_2^c)$
        \State $\hat{m}_t \gets m_t / (1 - \beta_1^c)$
        \State $\hat{d_t} \gets $ ( $r\mathbf{1}$ \algorithmicif\ {t==1} \algorithmicelse\ $d_t / (1 - \beta_1^{c-1})$ )
        \State $w_t \gets w_{t-1} \!-\! \hat{d}_t \left(\diag\left(\left(c \left(\hat{v}_t + \epsilon\right)\right)^{\frac{1}{2}}\right)\right)^{-1}\! \odot \hat{m}_t$ %\Comment{update step}
    \Else
        \State $w_t \!\gets\! w_{t-1}$, $v_t \!\gets\! v_{t-1}$, $m_t \!\gets\! m_{t-1}$, $d_t \!\gets\! d_{t-1}$
    \EndIf
    \For{$k=1, \dots, K$}
        \State $\theta_{tk} \dist \kappa_{w_{t}}(\cdot \mid \theta_{(t-1)k})$ \Comment{record $\ell_{tk}$}
    \EndFor
    \LineComment Hot-start test
    \State $h \!\gets\! $ (true \algorithmicif\ $h$ \algorithmicelse\ $\texttt{HotStartTest}\!\left(\!\left(\!\ell_{ik}\!\right)_{i=1,k=1}^{t,K}\!,t\!\right)$)
\EndFor

\State\Return $w_T$

\ealgc
\ealg

\section{Hot DoG}
\label{sec:nvdog}
In this section, we develop our novel Markovian optimization method, \emph{Hot-start DoG} (Hot DoG),
presented in \cref{alg:NVDoG}. Our method extends the original DoG optimizer in two ways: 
(1) we add a tuning-free hot-start test that automatically detects when the 
Markov chains have properly mixed and stochastic gradient estimates are stable, 
at which point we start coreset weight optimization;
and (2) we apply acceleration techniques to DoG.

\subsection{Hot-Start Test}\label{sec:hotstarttest}
Poorly initialized Markov chain states $\theta_0$ can be detrimental to 
the performance of learning-rate-free methods in Coreset MCMC. 
\cref{fig:burnintest}, especially \cref{fig:burnintest-linear,fig:burnintest-logistic,fig:burnintest-poiss}
show that this is likely due to the bias of initial gradient estimates. %In particular,
When $\theta_0$ is initialized far away from high density regions of $\pi_{w_0}$, the initial gradient estimates 
can have norms that are orders of magnitude larger than those 
computed using \iid draws. This leads to a quickly vanishing learning rate in \cref{eq:learningrates}. 
Therefore, it is crucial to 
hot-start the Markov chains to ensure they are properly mixed before training the coreset weights.
There are MCMC convergence diagnostics
that could be used for this purpose (e.g, $\shR$ \citep{vehtari2021rank}); many 
work only with real-valued variables, and are overly stringent for our application.
We require a test that works for general coreset posteriors of the form \cref{eq:coresetposterior}
and checks only that gradient estimates have stabilized reasonably.

To address this challenge, we propose keeping the weights fixed at their initialization
(i.e., $w_{t+1} \gets w_t$) until a hot-start test passes.
For the test, for each Markov chain $k\in [K]$, 
we split the iterates $i=1,\dots, t$ into 3 segments, each
of equal length $n = \ceil{t/3}$. 
We compute the average log-potentials for the two latter segments $m_{k1}$, $m_{k2}$,
and the standard deviations of residual errors $s_{k1}, s_{k2}$ from a linear fit 
\[
m_{ki} \!=\! \frac{\sum_{j=in\!+\!1}^{(i\!+\!1)n}\! \ell_{jk}}{n},
s_{ki}^2 \!=\! \frac{\min_{\substack{a,b}}\! \sum_{j=in\!+\!1}^{(i\!+\!1)n} (a\! +\! bj\! -\! \ell_{jk})^2}{n-2}.
\]
Here $\ell_{jk} = \sum_{m'=1}^M w_{0m'}\ell_{m'}(\theta_{jk})$ is the log-potential for chain $k$ at iteration $j$.
Our test monitors the difference between $m_{k1}$ and $m_{k2}$ relative to $s_{k1}$ and $s_{k2}$. 
A small difference in the averages indicates that the chains have stabilized. 
The residual standard errors allows us to remove trends from the noise computation.
We define, for each $k\in[K]$,
\[
    u_k = \frac{\lt| m_{k1} - m_{k2} \rt|}{\max\{s_{k1}, s_{k2}\}},
\]
and use the median of $\left( u_{k} \right)_{k=1}^K$ as our test statistic. This test statistic is checked against 
a threshold $c$; the test passes when the median test statistic is less than $c$. 
\cref{alg:burnintermination} shows the pseudocode for the hot-start test.
We find in practice setting $c=0.5$ works well in general.

\balg[t]
\caption{\texttt{HotStartTest}} \label{alg:burnintermination}
\balgc
\Require $\left(\ell_{ik}\right)_{i=1, k=1}^{t, K}$, $t$, $c=0.5$
\State $n = \texttt{ceil}(t/3)$
\For{$k=1, \dots, K$}
    \State $s^2_{k1} \gets \frac{1}{n-2}\min_{a,b\in\reals}\sum_{i=n+1}^{2n} \left( a+b i - \ell_{ik} \right)^2$
    \State $s^2_{k2} \gets \frac{1}{n-2}\min_{a,b\in\reals}\sum_{i=2n+1}^{t} \left( a+b i - \ell_{ik} \right)^2$
    \State $u_{k} \gets \frac{\left|\left(\frac{1}{n}\sum_{i=n+1}^{2n} \ell_{ik}\right) - \left(\frac{1}{n}\sum_{i=2n+1}^{t}\ell_{ik}\right)\right|}{\max\{s_{k1}, s_{k2}\}}$
\EndFor
\State \Return (true \algorithmicif\ \texttt{median}$\left( u_{1}, \dots, u_{K} \right) < c$ \algorithmicelse\  false)
\ealgc
\ealg

\subsection{Acceleration}
To accelerate DoG, we begin by noting that 
the denominator of the DoG learning rate in \cref{eq:learningrates} is similar to that of 
AdaGrad \citep{duchi2011adaptive} in that it is a cumulative sum of some function of the gradient. Therefore, 
we can leverage the idea used in RMSProp \citep{hinton2012neural} for accelerating AdaGrad to accelerate DoG. 
In particular, at iteration $t$, we can replace $\sum_{i\leq t} \|\hat{g}_i\|^2$ with 
$t\hat{v}_t$, the bias-corrected exponential moving average of the squared gradient. 
This allows us to exponentially decrease the weights of past gradient norms. As 
a result, the effect of the early $\|\hat{g}_t\|^2$ terms on the learning rate diminishes over time, 
resulting in less conservative learning rates. 
To account for situations where the gradient estimates differ in scale across dimensions, we apply 
the above acceleration technique in a coordinate-wise fashion and obtain the following update rule for $\hat{v}_t$:
\[
    v_t = \beta_2 v_{t-1} + (1-\beta_2) \hat{g}_t^2, \quad \hat{v}_t = \frac{v_t}{1-\beta_2^t},
\]
where $\beta_2 \in (0,1)$ is the exponential decay rate, $v_0 = 0$, and $\hat{g}_t^2$ denotes the vector with each entry of $\hat{g}_t$ squared.
We further apply the same idea to $r_t$, the maximum distance traveled from $w_0$, and $\hat{g}_t$, the gradient estimate itself. 
We use $\beta_1 \in (0,1)$ to denote the exponential decay rate for these two quantities.
Our final proposed optimization procedure is outlined in \cref{alg:NVDoG}.
Note that in \cref{alg:NVDoG}, all computations are coordinate-wise.

In Hot DoG, we set the exponential decay rates, $\beta_1$ and $\beta_2$, to be the same as those in 
\citet{kingma2014adam}, and we set the initial learning rate $r$ to a small constant (default $10^{-3}$) following the 
recommendation of \citet{ivgi2023dog}. 

\subsection{Convergence Analysis}
In this subsection, we present a theoretical analysis of the convergence of the coreset weights produced by Hot DoG. 
We begin by stating the set of assumptions, under which our analysis is conducted.
These assumptions are stated formally stated in \cref{sec:assumptions}.
As required by \cref{alg:NVDoG}, we have that $|\beta_1|<1$, $|\beta_2|<1$, and $\epsilon, r > 0$. 
We further impose a set of assumptions about the feasible region $\mathcal{W}$ of the coreset weights.
Namely, we assume (1) the coreset weights are non-negative and their sum is bounded 
above by a constant $B$ (\cref{assump:exact,assump:constraint}), and (2) the existence of an exact coreset 
$w^\star\in\mathcal{W}$ in the sense that $\kl{\pi_{w^\star}}{\pi}=0$.
Both of these assumptions greatly simplify the analysis without sacrificing the representative nature of our assumed model.
A typical choice for the coreset weight bound is to set $B=N$, where $N$ is the total number of observations. 
In terms of the optimal coreset, past work has shown that it provides a near-exact approximation with high probability 
for the wide class of strongly log-concave models [\citealp[Thm.~4.2]{naik2022fast}].
Under \cref{assump:exact}, which is similar to Assumption 3.1 in \citet{chen2024coreset},
we do not expect the convergence result to change in a meaningful way, aside from there being a persistent error 
corresponding to the optimal coreset error. 

Finally, we state our assumptions regarding the stochastic gradient (\cref{eq:gradest}), which estimates \cref{eq:grad}. 
We assume that the stochastic gradients are uniformly bounded above by a constant $U$ (\cref{assump:grad_bound}).
Now note that in \cref{eq:gradest}, Monte Carlo error from the MCMC samples $\theta_t$ contributes to the stochasticity. 
We additionally impose a mixing condition on the Markov chains (\cref{assump:mixing}), 
and assume that the Monte Carlo error is controlled (\cref{assump:noise}).

We now present our main theorem in \cref{thm:convergence}. 
The proof of \cref{thm:convergence} can be found in \cref{sec:proof}.
Our result shows that Hot DoG produces coreset weights that converge to the optimum in expectation at 
a sublinear rate.
This convergence rate is consistent with ADAM and other 
learning-rate-free stochastic gradient methods discussed in the paper 
(see for example [\citealp[Thm.~3.10]{ivgi2023dog} and [\citealp[Thm.~2]{mishchenko2023prodigy}]).

\begin{theorem}[Hot DoG convergence]\label{thm:convergence}
    Suppose \cref{assump:constraint,assump:exact,assump:grad_bound,assump:mixing,assump:noise} hold.
    As $t \to \infty$,
    \[
        \E\|w_t \!-\! w^\star\|^2 = O\left( \frac{1}{\sqrt{t}} \right).
    \]
\end{theorem}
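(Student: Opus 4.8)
The plan is to run a stochastic-gradient telescoping argument on the Lyapunov function $V_t \coloneqq \E\|w_t - w^\star\|^2$, after first disposing of the hot-start phase and pinning down the size of the effective step. The first step is to show that the hot-start test of \cref{alg:burnintermination} passes after an almost surely finite (random) iteration $\tau$: while the weights are frozen at $w_0$, the chains are $\pi_{w_0}$-invariant, so by the mixing condition (\cref{assump:mixing}) the recorded log-potentials $\ell_{jk}$ are asymptotically stationary; hence the segment means $m_{k1}, m_{k2}$ both converge to $\E_{\pi_{w_0}}[\ell_{\cdot k}]$ while the residual standard errors $s_{k1}, s_{k2}$ stay bounded away from $0$, so $u_k \to 0$ and the median test statistic eventually drops below $c$. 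Because \cref{thm:convergence} is asymptotic ($t\to\infty$), it then suffices to analyze the post-$\tau$ recursion $w_t = w_{t-1} - \hat d_t \odot (c(\hat v_t + \epsilon))^{-1/2} \odot \hat m_t$ with counter $c = t-\tau$, which grows linearly in $t$.

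The second step is to bound every quantity in the update uniformly. From \cref{assump:grad_bound} ($\|\hat g_t\| \le U$) the exponential averages satisfy $\|\hat m_t\| \le U$ and $0 \le \hat v_{t,j} \le U^2$ coordinate-wise, so each preconditioner entry lies in $[1/\sqrt{c(U^2+\epsilon)}, 1/\sqrt{c\epsilon}]$; the crucial consequence is that the $\epsilon$-floor together with the gradient bound forces the per-coordinate learning rate to be of exact order $\hat d_t/\sqrt c = \Theta(1/\sqrt t)$ — it neither collapses prematurely (the failure mode of the naive DoG runs in \cref{fig:direct_application}) nor blows up. From \cref{assump:constraint,assump:exact} the feasible iterates lie in a bounded set of diameter $D$, so $\|w_t - w^\star\| \le D$ and $\hat d_t \le D$; moreover the inner max makes $d_t$ nondecreasing, hence $\hat d_t$ converges, which simplifies the telescoping. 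Expanding $V_t = V_{t-1} - 2\E\langle \mathrm{step}_t, w_{t-1} - w^\star\rangle + \E\|\mathrm{step}_t\|^2$, the quadratic term is $O(D^2U^2/(t\epsilon)) = O(1/t)$.

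The core is the cross term. Write $\hat m_t = \sum_{i\le t} a_{t,i}\hat g_i$ as a geometrically weighted average ($a_{t,i}$ decaying like $\beta_1^{t-i}$), and decompose $\hat g_i = \nabla \kl{\pi_{w_{i-1}}}{\pi} + \xi_i + b_i$, where $\xi_i$ is a martingale-difference noise and $b_i$ an MCMC bias term controlled by \cref{assump:mixing,assump:noise}. I would then (i) replace $w_{t-1} - w^\star$ by $w_{i-1} - w^\star$ in each summand, paying a ``lag'' error $\|w_{t-1} - w_{i-1}\| \le \sum_{j=i}^{t-1} O(1/\sqrt j)$ that is geometrically down-weighted and so contributes only $O(1/\sqrt t)$; (ii) use convexity of $w \mapsto \kl{\pi_w}{\pi}$ to get $\langle \nabla\kl{\pi_{w_{i-1}}}{\pi}, w_{i-1} - w^\star\rangle \ge \kl{\pi_{w_{i-1}}}{\pi} \ge 0$, and the curvature of the objective around the exact coreset $w^\star$ (there $\nabla^2\kl = \Cov_\pi(\ell_1,\dots,\ell_M)$, which is non-degenerate in the representative setting of \cref{sec:assumptions}) to convert this progress into a term of order $\|w_{i-1} - w^\star\|^2$; and (iii) collect $\xi_i$ and $b_i$ into error terms that are negligible at the $O(1/\sqrt t)$ scale. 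This yields a recursion $V_t \le (1 - a\gamma_t)V_{t-1} + b\gamma_t^2 + (\text{lower-order errors})$ with $\gamma_t$ of order $1/\sqrt t$, and a standard induction on such a recursion gives $V_t = O(1/\sqrt t)$, matching the rates cited for DoG and prodigy.

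I expect the main obstacle to be controlling the coordinate-wise diagonal preconditioner $\diag((c(\hat v_t + \epsilon))^{-1/2})$ together with momentum and the biased MCMC gradients simultaneously: convexity bounds the inner product $\langle \nabla\kl, w - w^\star\rangle$ as a whole, but a diagonal preconditioner reweights coordinates and $\langle \diag(\cdot)g, v\rangle$ cannot be bounded below by a constant multiple of $\langle g, v\rangle$ term by term. The remedy I would pursue is a per-coordinate telescoping of the $V_t$ recursion combined with a slow-variation estimate showing that $\hat v_t$ (hence the preconditioner) changes little over the $\Theta(1/(1-\beta_1))$ timescale carrying the momentum mass; making this quantitative while absorbing $b_i$ and keeping every error term at $O(1/\sqrt t)$ is the delicate part. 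A secondary nuisance is that $\tau$ is random and the bias-correction factors $(1-\beta_1^c)^{-1}, (1-\beta_2^c)^{-1}$ are only asymptotically $1$, so the clean recursion holds only for $t$ past an almost surely finite threshold — which is why the conclusion is stated as a $t\to\infty$ asymptotic.
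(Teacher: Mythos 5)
Your overall architecture (bound the preconditioner above and below by deterministic $\Theta(1/\sqrt{t})$ quantities, expand $\E\|w_t-w^\star\|^2$ one step at a time, and close a recursion of the form $V_t \le (1-a/\sqrt{t})V_{t-1}+O(1/t)$) matches the paper's, and your uniform bounds on $\hat m_t$, $\hat v_t$ and the effective step size are essentially the ones the paper derives. The gap is in the core step that produces the contraction. You propose to get descent from ``convexity of $w\mapsto\kl{\pi_w}{\pi}$'' plus local curvature at $w^\star$, but neither is available: the coreset KL is not convex in $w$ in general, and none of \cref{assump:constraint,assump:exact,assump:grad_bound,assump:mixing,assump:noise} says anything about the population objective's Hessian. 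What the paper actually uses is that, under the exact-coreset assumption (\cref{assump:exact}), the stochastic gradient is \emph{exactly linear} in $w-w^\star$: $g(w_{t-1},\theta_{t-1},\mathcal{S}_{t-1}) = G_{t-1}(w_{t-1}-w^\star)+H_{t-1}(1-s_{t-1})$ with $\E[1-s_{t-1}]=0$, where $G_{t-1}$ is a random sample-covariance matrix. The mixing condition (\cref{assump:mixing}) is a lower bound $\E[G_t\mid\cdot]\succeq\lambda I$ on this random matrix, and it is this --- not convexity --- that yields the per-step contraction $\E[A_t^\top A_t\mid w_{t-1}]\preceq e^{-D/\sqrt{t}}I$ in \cref{lem:At}. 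Your inequality $\langle\nabla\kl{\pi_{w}}{\pi},\, w-w^\star\rangle\ge\kl{\pi_w}{\pi}$ is unjustified as stated; the salvageable version, $\langle\nabla\kl{\pi_w}{\pi},\,w-w^\star\rangle=(w-w^\star)^\top\Cov_{\pi_w}(\ell)(w-w^\star)\ge 0$, holds only because of the linear structure you did not invoke.

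Second, the two difficulties you yourself flag as ``the delicate part'' --- the coordinate-wise preconditioner breaking the inner-product lower bound, and absorbing the momentum-lag and MCMC-correlation errors --- are precisely where the proof has content, and your proposed remedies do not close them. The paper needs no slow-variation argument for $\hat v_t$: it sandwiches $\diag(R_{t-1})$ between scalar multiples of $I$ (using the $r$-floor, the $\epsilon$-floor and the gradient bound $U$) and shows $\diag(R_{t-1})Q\succeq\tfrac{1}{2}\lt(\min_i R_{t-1,i}\rt)Q$ for the PSD matrix $Q=\sum_k\beta_1^kG_{t-k-1}$. The lag term is controlled by a separate lemma giving $\E\|b_t\|^2=O(1/t^2)$ (\cref{lem:bt}), the subsampling term by $\E\|c_t\|^2=O(1/t)$ via \cref{assump:noise} (\cref{lem:ct}), and the final $O(1/\sqrt{t})$ rate requires a careful estimate of $\sum_\tau \tau^{-1}e^{-D(\sqrt{t}-\sqrt{\tau})}$ --- none of which is ``standard induction'' on your recursion. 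As written, the proposal identifies the right skeleton but leaves the load-bearing steps open and rests the descent step on a convexity property the objective does not have.
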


It is worth noting that whether to employ the hot-start test does not alter the convergence rate of Hot DoG as shown 
in \cref{thm:convergence}. Instead, the hot-start test can lead to a more favourable constant in the 
convergence rate.
As we discussed in \cref{sec:hotstarttest}, the hot-start test helps avoid updating the coreset weights using 
initial gradient estimates that may have unusually large norms. 
In terms of our analysis, by holding off optimizing $w$ until the hot-start test passes, 
we can obtain a tighter bound on the gradient norm (i.e., a smaller $U$ in \cref{assump:grad_bound}). 
This results in a smaller constant in the convergence rate given in \cref{thm:convergence}, 
ultimately leading to improved finite-time performance.

\section{Experiments}
\label{sec:experiments}

\begin{figure*}
    \begin{subfigure}{0.33\textwidth}
    \includegraphics[width=\columnwidth]{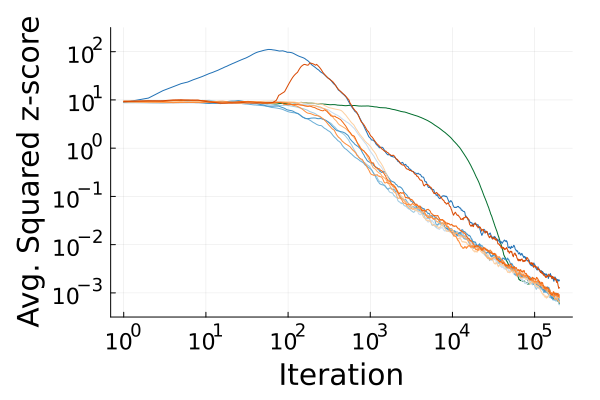}
    \caption{Gaussian location}
    \end{subfigure}
    \begin{subfigure}{0.33\textwidth}
    \includegraphics[width=\columnwidth]{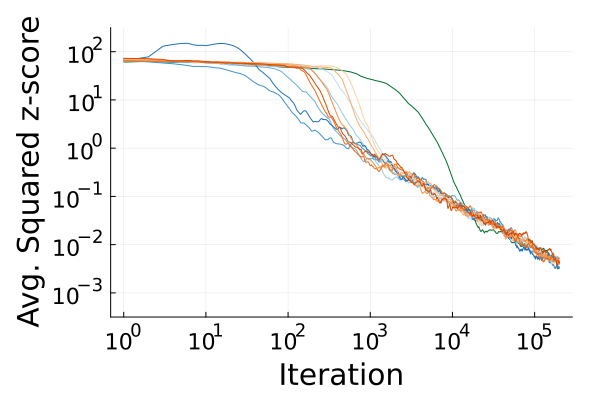}
    \caption{Sparse regression}
    \end{subfigure}
    \begin{subfigure}{0.33\textwidth}
    \includegraphics[width=\columnwidth]{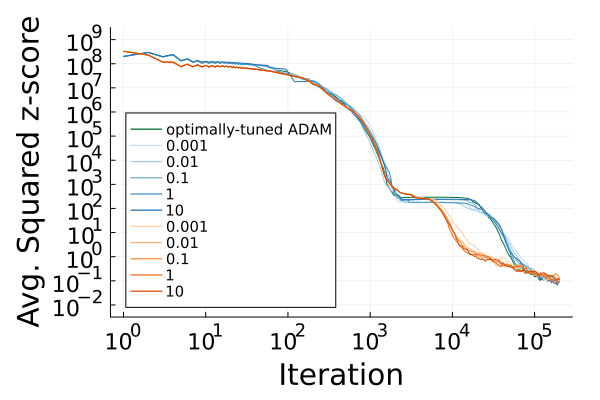}
    \caption{Linear regression}\label{fig:burnincomparison-linear}
    \end{subfigure}
    \begin{subfigure}{0.33\textwidth}
    \includegraphics[width=\columnwidth]{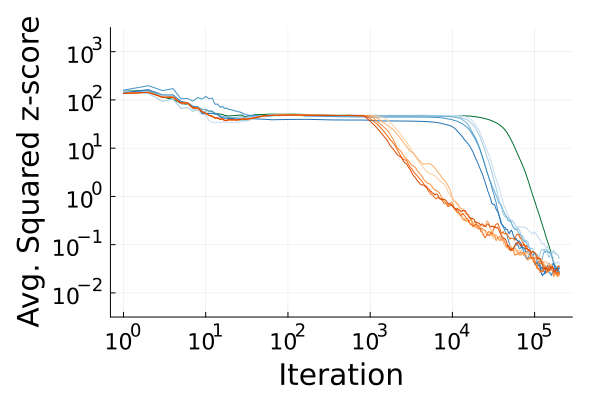}
    \caption{Logistic regression}
    \end{subfigure}
    \begin{subfigure}{0.33\textwidth}
    \includegraphics[width=\columnwidth]{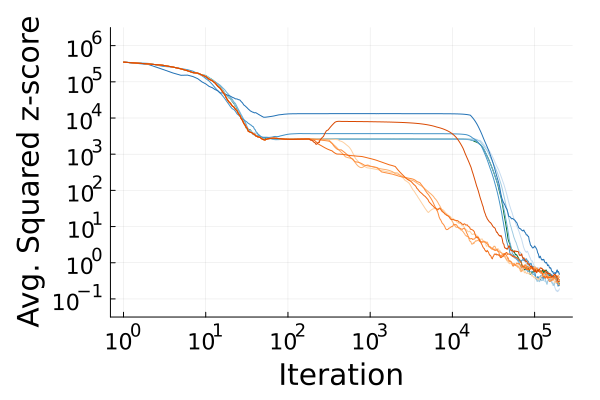}
    \caption{Poisson regression}
    \end{subfigure}
    \begin{subfigure}{0.33\textwidth}
    \includegraphics[width=\columnwidth]{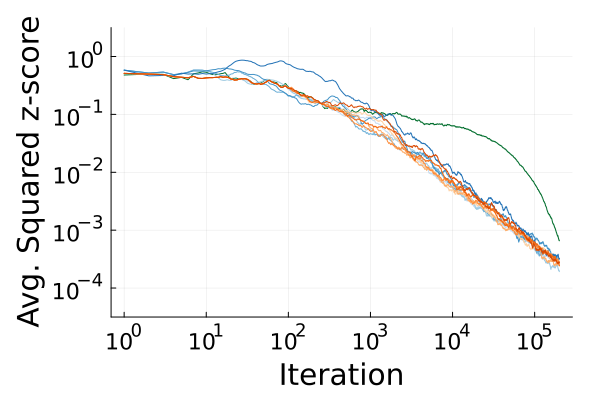}
    \caption{Bradley-Terry}
    \end{subfigure}
    \caption{Traces of average squared coordinate-wise z-scores between the true and approximated posterior 
    across all experiments, obtained using Hot DoG with and without hot-start test. 
    All figures share the legend in \cref{fig:burnincomparison-linear}. The coreset size $M$ is $1000$ and each line
    represents a different initial learning rate parameter. The lines indicate the median from $10$ runs.
    Orange lines indicate runs with hot-start test and blue lines without.}
    \label{fig:burnincomparison}
\end{figure*}

\begin{figure*}
    \begin{subfigure}{0.33\textwidth}
        \includegraphics[width=\columnwidth]{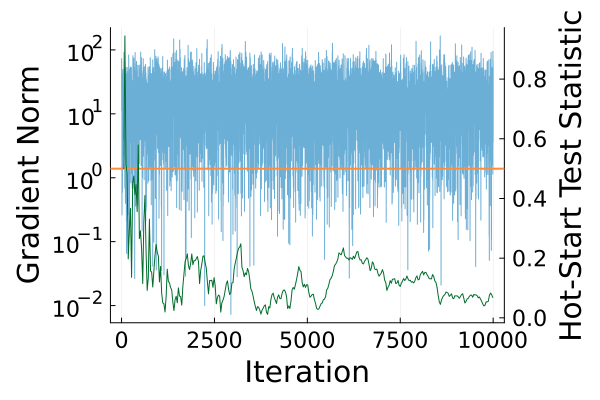}
        \caption{Gaussian location}
    \end{subfigure}
    \begin{subfigure}{0.33\textwidth}
        \includegraphics[width=\columnwidth]{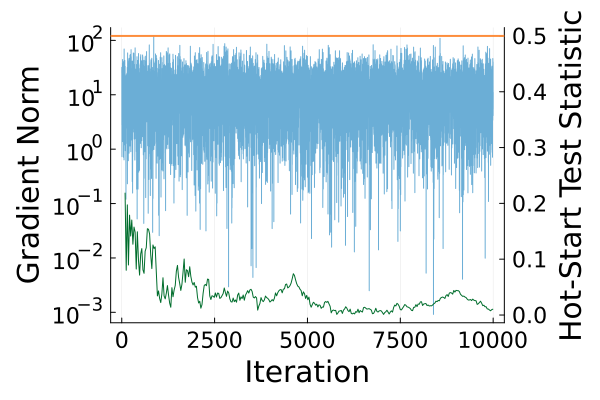}
        \caption{Sparse regression}
    \end{subfigure}
    \begin{subfigure}{0.33\textwidth}
        \includegraphics[width=\columnwidth]{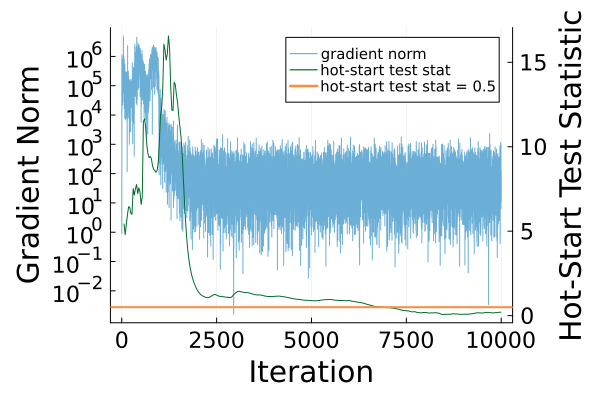}
        \caption{Linear regression}\label{fig:burnintest-linear}
    \end{subfigure}
    \begin{subfigure}{0.33\textwidth}
        \includegraphics[width=\columnwidth]{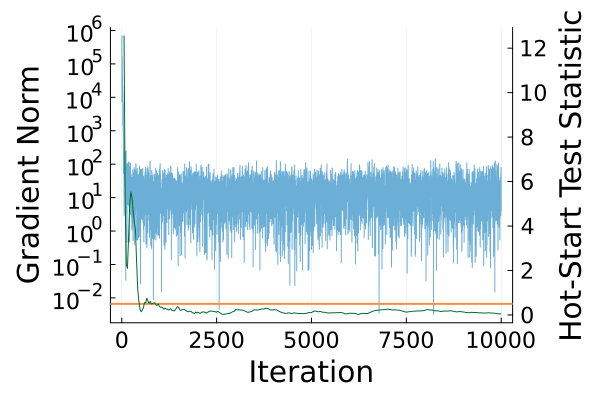}
        \caption{Logistic regression}\label{fig:burnintest-logistic}
    \end{subfigure}
    \begin{subfigure}{0.33\textwidth}
        \includegraphics[width=\columnwidth]{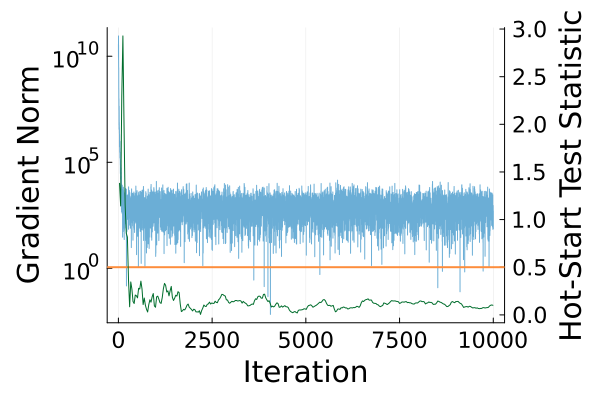}
        \caption{Poisson regression}\label{fig:burnintest-poiss}
    \end{subfigure}
    \begin{subfigure}{0.33\textwidth}
        \includegraphics[width=\columnwidth]{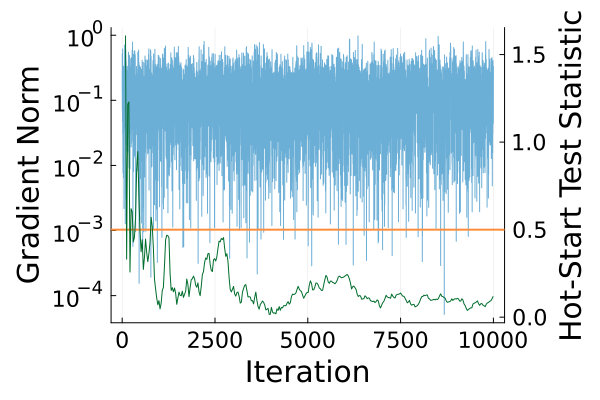}
        \caption{Bradley-Terry}
    \end{subfigure}
    \caption{Trace of gradient estimate norms (blue) and hot-start test statistics (green) before weight optimization
            across all experiments with $M=1000$.
            The orange horizontal line is the test statistic threshold $c=0.5$.}
    \label{fig:burnintest}
\end{figure*}

\begin{figure*}
    \begin{subfigure}{0.33\textwidth}
    \includegraphics[width=\columnwidth]{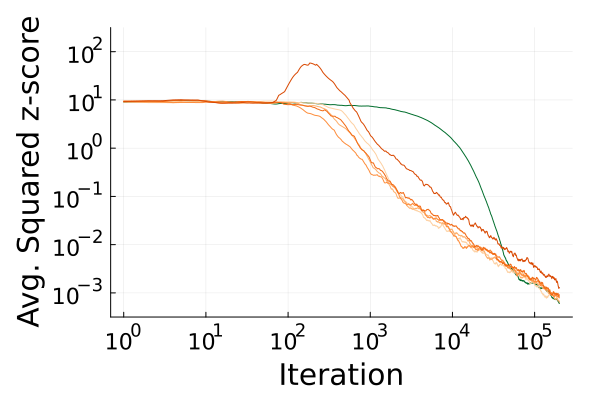}
    \caption{Gaussian location}
    \end{subfigure}
    \begin{subfigure}{0.33\textwidth}
    \includegraphics[width=\columnwidth]{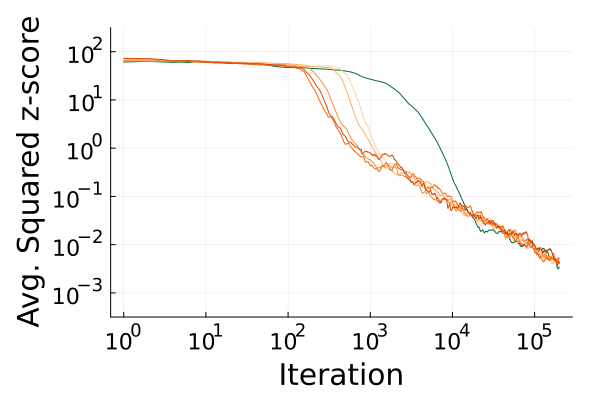}
    \caption{Sparse regression}
    \end{subfigure}
    \begin{subfigure}{0.33\textwidth}
    \includegraphics[width=\columnwidth]{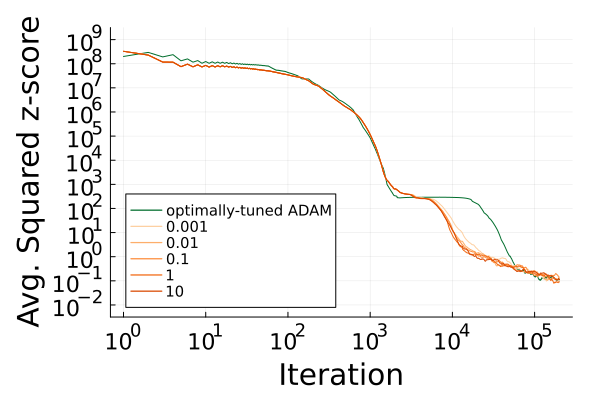}
    \caption{Linear regression}\label{fig:tracecombined-linear}
    \end{subfigure}
    \begin{subfigure}{0.33\textwidth}
    \includegraphics[width=\columnwidth]{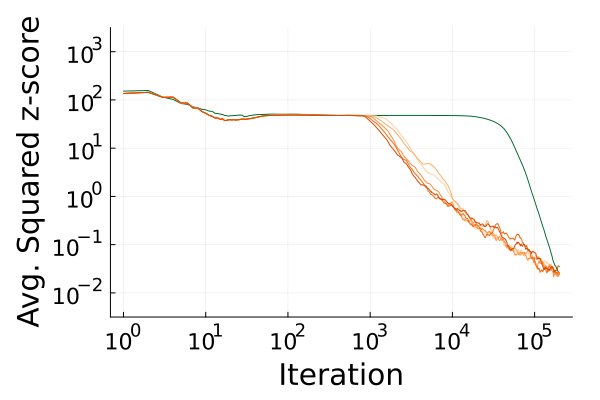}
    \caption{Logistic regression}
    \end{subfigure}
    \begin{subfigure}{0.33\textwidth}
    \includegraphics[width=\columnwidth]{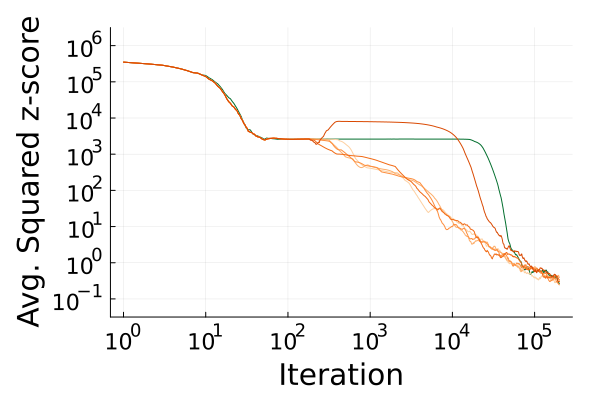}
    \caption{Poisson regression}
    \end{subfigure}
    \begin{subfigure}{0.33\textwidth}
    \includegraphics[width=\columnwidth]{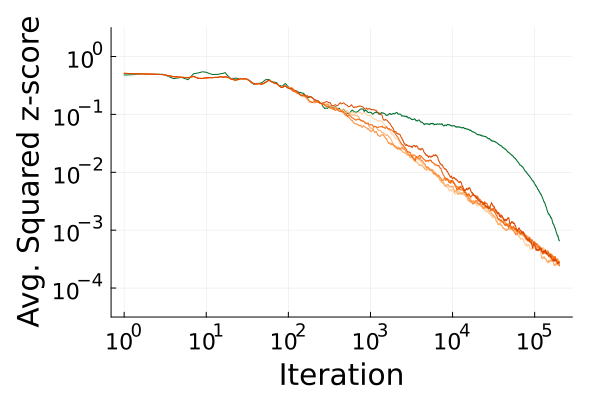}
    \caption{Bradley-Terry}
    \end{subfigure}
    \caption{Traces of average squared coordinate-wise z-scores between the true and approximated posterior 
    across all experiments, obtained from Hot DoG and optimally-tuned ADAM. 
    All figures share the legend in \cref{fig:tracecombined-linear}.
    The coreset size $M=1000$ and each line represents a different initial learning rate parameter. 
    The lines indicate the median from $10$ runs.}
    \label{fig:tracecombined}
\end{figure*}

\begin{figure*}
    \begin{subfigure}{0.24\textwidth}
        \includegraphics[width=\columnwidth]{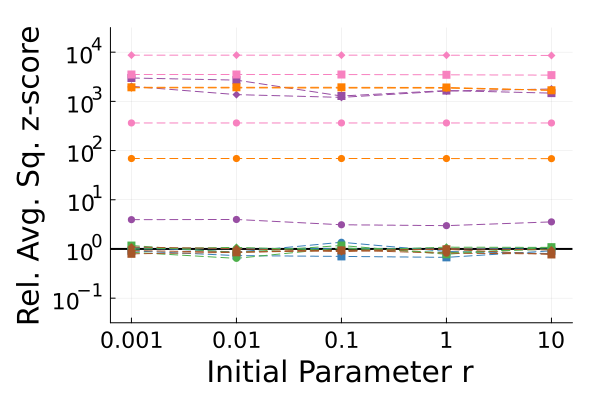}
        \includegraphics[width=\columnwidth]{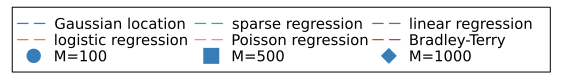}
        \caption{DoG}
    \end{subfigure}
    \begin{subfigure}{0.24\textwidth}
        \includegraphics[width=\columnwidth]{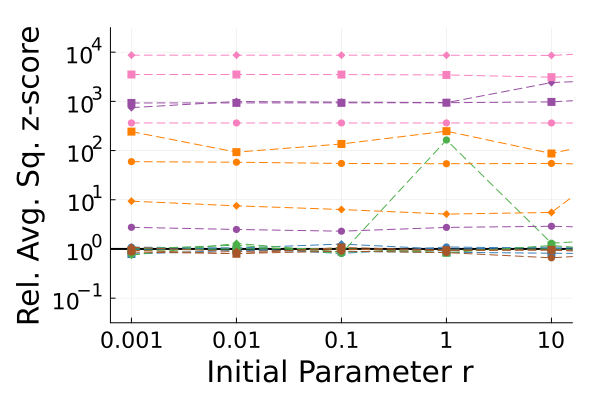}
        \includegraphics[width=\columnwidth]{plots/legend.png}
        \caption{DoWG}
    \end{subfigure}
    \begin{subfigure}{0.24\textwidth}
        \includegraphics[width=\columnwidth]{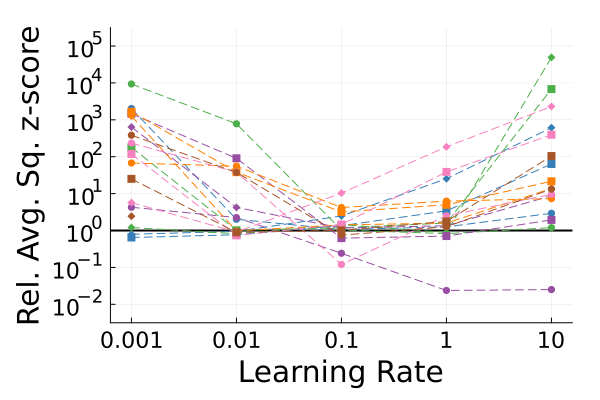}
        \includegraphics[width=\columnwidth]{plots/legend.png}
        \caption{ADAM}
    \end{subfigure}
    \begin{subfigure}{0.24\textwidth}
        \includegraphics[width=\columnwidth]{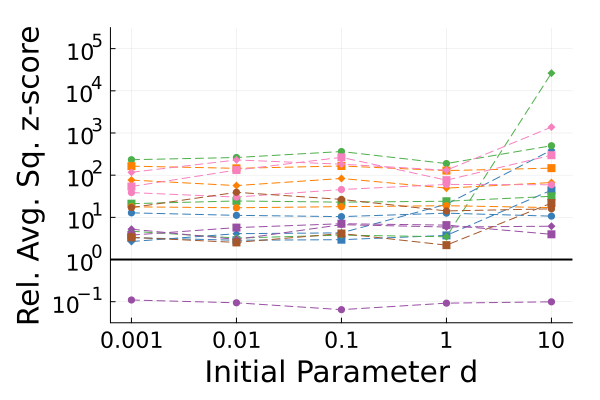}
        \includegraphics[width=\columnwidth]{plots/legend.png}
        \caption{prodigy ADAM}
    \end{subfigure}
    \begin{subfigure}{0.24\textwidth}
        \includegraphics[width=\columnwidth]{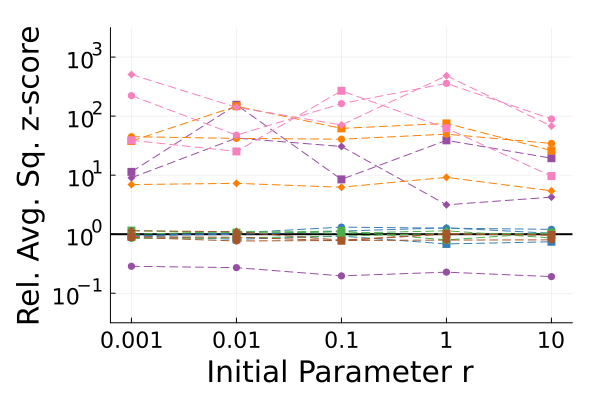}
        \includegraphics[width=\columnwidth]{plots/legend.png}
        \caption{DoG with hot-start}
    \end{subfigure}
    \begin{subfigure}{0.24\textwidth}
        \includegraphics[width=\columnwidth]{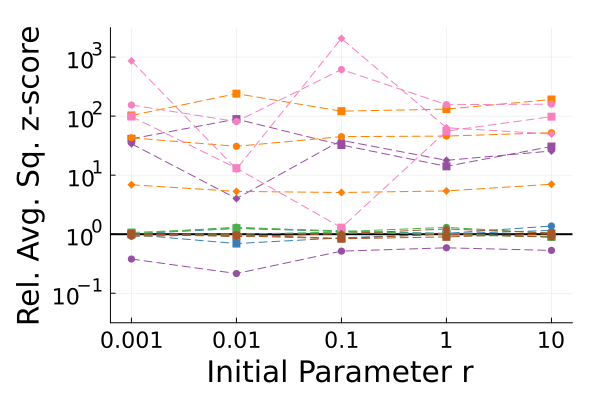}
        \includegraphics[width=\columnwidth]{plots/legend.png}
        \caption{DoWG with hot-start}
    \end{subfigure}
    \begin{subfigure}{0.24\textwidth}
        \includegraphics[width=\columnwidth]{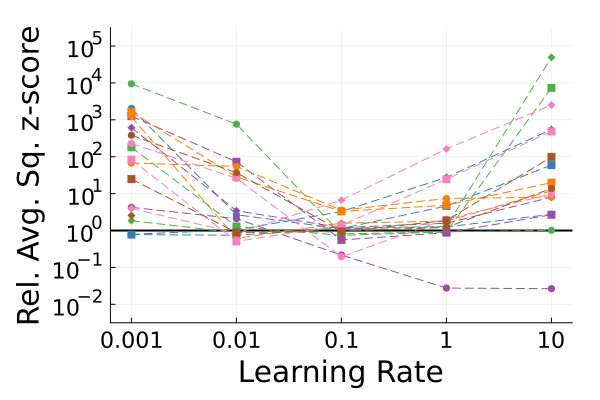}
        \includegraphics[width=\columnwidth]{plots/legend.png}
        \caption{ADAM with hot-start}
    \end{subfigure}
    \begin{subfigure}{0.24\textwidth}
        \includegraphics[width=\columnwidth]{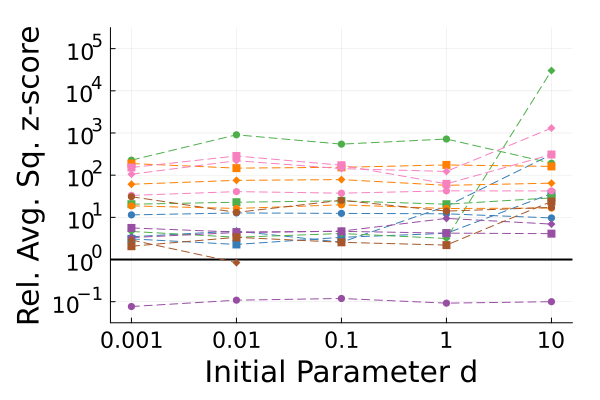}
        \includegraphics[width=\columnwidth]{plots/legend.png}
        \caption{prodigy ADAM with hot-start}
    \end{subfigure}
    \caption{Relative Coreset MCMC posterior approximation error comparing different optimization algorithms 
    (labeled in the subfigure captions) and the proposed Hot DoG method (with fixed $r=0.001$ and $c=0.5$).
    The metric plotted is
    the ratio of average squared z-scores (defined in \cref{eq:avg_sq_z}) under the algorithm labeled in each 
    subfigure caption to those under Hot DoG.
    Values above the horizontal black line ($10^0$) indicate that 
    the proposed Hot DoG method outperformed the method it compared to.
    Median values after $200,000$ optimization iterations across $10$ trials are used for the relative comparison 
    for a variety of datasets, models, and coreset sizes.}
    \label{fig:comparison_relative}
\end{figure*}

In this section, we demonstrate the effectiveness of Hot DoG and compare our method against other learning-rate-free 
stochastic gradient methods: optimally-tuned ADAM from a log scale grid search, as well as prodigy ADAM 
 \citep{mishchenko2023prodigy}, DoG \citep{ivgi2023dog}, and DoWG \citep{khaled2023dowg} over different initial parameters. 
We compare the quality of posterior approximations over different coreset sizes $M$ and weight optimization procedures.
Following \citet{chen2024coreset},
we set the number of Markov chains to $K=2$
and subsample size to $S=M$ in \cref{eq:gradest}.
We set
$\kappa_w$ to the hit-and-run slice sampler with doubling
\citep{belisle1993hit,neal2003slice} for all real data experiments.
For the Gaussian location model, we use a kernel that directly samples from $\pi_w$ [\citealp[Sec.~3.4]{chen2024coreset}];
for the sparse regression example, we use Gibbs sampling \citep{george1993variable}.

We compare these algorithms using six different Bayesian 
models, the details of which are in \cref{sec:appendix_4}.
We use Stan \citep{carpenter2017stan} to obtain full data inference results for real data experiments, 
and Gibbs sampling \citep{george1993variable} for the sparse regression model with discrete variables. 
For all experiments, we measure the posterior approximation quality using the 
average squared z-score, which we define as 
\[
    \frac{1}{D}\sum_{i=1}^D ( \frac{\mu_i - \hat{\mu}_i}{\sigma_i})^2. \label{eq:avg_sq_z}
\] 
In the above definition, $D$ denotes the dimension of $\Theta$; 
$\mu_i$ and $\sigma_i$ are, respectively, the coordinate-wise mean and standard deviation estimated using the full data posterior, 
and $\hat{\mu}_i$ is the coordinate-wise mean estimated using draws from Coreset MCMC.
This estimate is computed in a streaming fashion using the second half of all draws 
at the time; note this includes draws from $\pi_{w_0}$ before the hot-start test passes.

Each algorithm was run on 8 single-threaded cores of a 2.1GHz Intel Xeon Gold 6130 processor with 32GB memory. 
Code for these experiments is available at \url{https://github.com/NaitongChen/automated-coreset-mcmc-experiments}.
More experimental details and additional plots are in \cref{sec:appendix_4,sec:appendix_5}.

\textbf{Effect of hot-start test.}
\cref{fig:burnincomparison} compares Hot DoG with and without the hot-start test for $M=1000$ across all experiments;
the same plots for other coreset sizes can be found in \cref{sec:appendix_5}. 
Without the hot-start test, the traces often hit a long plateau, before the effect of 
exponentially-weighted averaging is able to decay early large gradient norms. 
On the other hand, with burn-in, we begin by simulating from Markov chains 
targeting $\pi_{w_0}$, and start optimizing the coreset weights only after the hot-start test has passed. 
In terms of the number of log potential evaluations, Hot DoG with 
burn-in leaves the plateau sooner than without burn-in. % phase.

\cref{fig:burnintest} examines the behaviour of the hot-start test in more detail, showing the traces 
of the gradient estimate norms $\|\hat{g}_t\|$ and test statistics \texttt{median}$(u_1,\dots,u_K)$ across optimization 
iterations when using Hot DoG. 
Here we only show plots for $M=1000$; the same plots for other 
coreset sizes can be found in \cref{sec:appendix_5}. 
In some experiments, the Markov chains are initialized reasonably well where 
the gradient norms are already stabilized, and the test passes almost immediately.
In others, the Markov chains are initialized poorly 
and the gradient norms are large, but nevertheless, the hot-start test passes
shortly after they stabilize. Across all 
experiments,
a test statistic threshold of 0.5 worked well.

\textbf{Robustness to fixed parameter $r$.}
\Cref{fig:tracecombined} provides an examination of the
robustness of the proposed method to the fixed initial learning rate parameter $r$. 
Across all experiments, different values of $r$ spanning multiple orders of magnitude 
result in similar posterior approximations across optimization iterations. Note that  $M$ is $1000$ for all plots in 
\cref{fig:tracecombined}. The same trends can be observed over different coreset sizes (see \cref{sec:appendix_5}).
In practice, we follow the recommendation of \citet{ivgi2023dog} and set $r=0.001$. 

\textbf{Comparison with other related methods.}
\Cref{fig:comparison_relative} shows a comparison between our method and DoG, DoWG, ADAM, as well as prodigy ADAM.
We fix $r=0.001$ and $c=0.5$ for Hot DoG. 
Since the hot-start test itself can be applied to all methods, Hot DoG is compared against 
others both with and without burn-in. The posterior approximation quality of 
Hot DoG is orders of magnitude better than all other methods in many settings tested, and remain competitive otherwise. 
In particular, Hot DoG is capable of matching the performance of optimally-tuned ADAM without tuning.
\section{Conclusion}
\label{sec:conclusion}
This paper introduced Hot DoG, a learning-rate-free stochastic gradient method
designed for learning coreset weights using Coreset MCMC. Our
method extends DoG, but includes adjustments tailored to the Markovian setting
of Coreset MCMC. In particular, Hot DoG includes a hot-start test detecting
when to start training coreset weights as well as acceleration
techniques. Our method is shown to produce coreset weights that converge to the 
optimum. The quality of coresets constructed by Hot DoG 
and their corresponding posterior approximation are robust to input parameters. 
Empirically, Hot DoG under our recommended setting ($r=0.001$ and $c=0.5$) 
produces better posterior approximations than other learning-rate-free stochastic 
gradient methods, and is competitive to optimally-tuned ADAM.

% \begin{contributions} % will be removed in pdf for initial submission 
% 					  % (without ‘accepted’ option in \documentclass)
%                       % so you can already fill it to test with the
%                       % ‘accepted’ class option
%     Briefly list author contributions. 
%     This is a nice way of making clear who did what and to give proper credit.
%     This section is optional.

%     H.~Q.~Bovik conceived the idea and wrote the paper.
%     Coauthor One created the code.
%     Coauthor Two created the figures.
% \end{contributions}

\begin{acknowledgements} % will be removed in pdf for initial submission,
						 % (without ‘accepted’ option in \documentclass)
                         % so you can already fill it to test with the
                         % ‘accepted’ class option
    T.C. and N.C. were supported by an NSERC Discovery Grant RGPIN-2019-03962, 
    and J.H.H. was partially supported by a National Science Foundation CAREER award IIS-2340586.
    We acknowledge the use of the ARC Sockeye computing platform from the University of British Columbia.
\end{acknowledgements}

% References
% \bibliographystyle{unsrtnat}
\bibliography{main.bib}

\newpage

\onecolumn

\title{Tuning-Free Coreset Markov Chain Monte Carlo via Hot DoG\\(Supplementary Material)}
\maketitle

\appendix
\section{Details of Experiments}
\label{sec:appendix_4}
\subsection{Model Specification}
In this subsection, we describe the six examples (two synthetic and four real data) that we used for our experiments. 
Processed versions of all datasets used for the experiments are available at 
\url{https://github.com/NaitongChen/automated-coreset-mcmc-experiments}.
For each of the regression models, we are given a set of points $(x_n,y_n)^N_{n=1}$, each consisting of features 
$x_n\in\reals^p$ and response $y_n$.

\textbf{Bayesian sparse linear regression:} This is based on Example 4.1 from \cite{george1993variable}. We use the model
\[
  \sigma^2 &\dist \InvGam\left(\nu/2, \nu\lambda/2\right),\\
  \forall i\in[p], \quad \gamma_i &\distiid \Bern(q), \\
  \beta_i \mid \gamma_i &\distind \Norm\left(0, \left(\mathds{1}(\gamma_i = 0)\tau + \mathds{1}(\gamma_i = 1)c\tau\right)^2 \right), \\ 
  \forall n\in[N], \quad y_n \mid x_n, \beta, \sigma^2 &\distind \Norm\left( x_n^\top\beta, \sigma^2 \right),
\]
where we set $\nu=0.1, \lambda=1, q=0.1, \tau=0.1$, and $c=10$. Here we model the variance $\sigma^2$, the vector of regression 
coefficients $\beta = \begin{bmatrix}\beta_1 & \dots & \beta_p \end{bmatrix}^\top\in\reals^p$ and  
the vector of binary variables $\gamma = \begin{bmatrix}\gamma_1 & \dots & \gamma_p \end{bmatrix}^\top\in\left\{0,1\right\}^p$ 
indicating the inclusion of the $p^\text{th}$ feature in the model.
We set $N=50{,}000$, $p=10$, $\beta^\star = \begin{bmatrix}0 & 0 & 0 & 0 & 0 & 5 & 5 & 5 & 5 & 5\end{bmatrix}^\top$, 
and generate a synthetic dataset by
\[
  \forall n \in [N], \quad x_n &\distiid \Norm\left( 0, I \right),\\
  \eps_n &\distiid \Norm\left( 0, 25^2 \right),\\
  y_n &= x_n^\top \beta^\star + \eps_n.
\]

\textbf{Bayesian linear regression:} We use the model
\[
  \begin{bmatrix} \beta & \log\sigma^2 \end{bmatrix}^\top &\dist \Norm(0,I),\\ 
  \forall n\in[N],
  y_n \mid x_n, \beta, \sigma^2 &\distind \Norm\left(\begin{bmatrix} 1 & x_n^\top \end{bmatrix}\beta, \sigma^2\right),
\]
where $\beta\in\reals^{p+1}$ is a vector of regression coefficients and $\sigma^2\in\reals_+$ is the noise variance. 
Note that the prior here is not conjugate for the likelihood.
The dataset consists of flight delay information from $N= 98{,}673$ observations and was constructed using flight 
delay data from \url{https://www.transtats.bts.gov/Homepage.asp} 
and historical weather information from \url{https://www.wunderground.com/}. We study 
the difference, in minutes, between the scheduled and actual departure times against $p=10$ features 
including flight-specific and meteorological information.

\textbf{Bayesian logistic regression:} We use the model
\[
    \forall i\in[p+1], \quad \beta_i &\distiid \Cauchy(0,1), \\
	\forall n\in[N], \quad y_n &\distind \Bern 
    \left(\left(1+\exp\left(-\begin{bmatrix} 1 & x_n^\top\end{bmatrix} \beta\right)\right)^{-1}\right),  
\]
where $\beta = \begin{bmatrix}\beta_1 & \dots & \beta_{p+1} \end{bmatrix}^\top\in\reals^{p+1}$ is a 
vector of regression coefficients. Here we use the same dataset as in linear 
regression, but instead model the relationship between whether a flight is cancelled using the same set of features. 
Note that of all flights included, only $0.058\%$ were cancelled.

\textbf{Bayesian Poisson regression:} We use the model
\[
  \beta &\dist \Norm(0, I),\\
  \forall n\in[N],
  y_n \mid x_n, \beta &\distind 
  \Poiss\left( \log\left( 1 + e^{ \begin{bmatrix} 1 & x_n^\top \end{bmatrix}\beta } \right) \right),
\]
where $\beta\in\reals^{p+1}$ is a vector of regression coefficients. 
The dataset consists of $N= 15{,}641$ observations, and 
we model the hourly count of rental bikes against $p=8$ features (e.g., temperature, humidity at the 
time, and whether or not the day is a workday). The original bike share dataset is available at 
\url{https://archive.ics.uci.edu/dataset/275/bike+sharing+dataset}.

The remaining two non-regression models are specified as follows.

\textbf{Gaussian location:} We use the model
\[
\theta &\dist \Norm(0, I),\\
\forall n \in [N], X_n &\distiid \Norm(\theta, I),
\]
where $\theta, X_n\in\reals^d$. Here we model the mean $\theta$. We set $N=10{,}000, d=20$ and generate a synthetic dataset by
\[
    \forall n\in[N], x_n\distiid\Norm(0,I).
\]

\textbf{Bradley-Terry model:} We use the model
\[
    \theta &\distiid \Norm(0, I),\\
    \forall n\in[N], y_{n} \mid h_n, v_n, \theta &\distind \Bern\left( \left( 1+\exp\left( (\theta_{v_n} - \theta_{h_n})/400 \right) \right)^{-1} \right),
\]
where $\theta \in \reals^d$. 
The dataset was constructed using games statistics from \url{https://www.nba.com/stats}
and consists of data of $N=26,651$ NBA games between the 2004 and 2022 seasons. 
$h_n$ and $v_n$ are the home team and visitor team IDs for the $n^\text{th}$ game in the dataset, and $y_n$ denotes 
the outcome of the game ($y_n=1$ if the home team won and $y_n=0$ if the visitor team won).
$\theta\in\reals^d$ represents the Elo ratings or relative skill levels [\citealp[Ch.~1]{elo1978rating}] for each of 
the $d=30$ teams. We model the Elo ratings using outcomes of pairwise comparisons between teams using game outcomes.

\subsection{Parameter Settings}
For full-data inference results of all examples except for the sparse linear regression model, we ran Stan 
\citep{carpenter2017stan} with 10 parallel chains,
each taking $100{,}000$ steps with the first $50{,}000$ discarded, for a combined $500{,}000$ draws. 
For full-data inference result of the sparse linear regression example, we use the Gibbs sampler developed by \cite{george1993variable}
to generate $200{,}000$ draws, with the first half discarded as burn-in.

To account for changes in $w$, for all real data experiments, we use the hit-and-run slice sampler 
with doubling \citep{belisle1993hit,neal2003slice}; 
for the Gaussian location model, we use a kernel that directly samples from $\pi_w$ [\citealp[Sec.~3.4]{chen2024coreset}].
for the sparse regression, we use the Gibbs sampler developed by \cite{george1993variable}.

We use Stan \citep{carpenter2017stan} to obtain full data inference results for real data experiments, 
and Gibbs sampling \citep{george1993variable} for the sparse regression model with discrete variables. 
The true posterior distribution for the Gaussian location model is available in closed form.

For ADAM, we test multiple learning rates over a log scale grid 
$\left( 10^k \right)$ for $k=-3,-2,\dots,1$.
For each experiment under each coreset size, the optimally-tuned ADAM is the one that obtained the lowest average 
squared z-score after $200,000$ iterations of weight optimization. 
For all learning-rate-free methods, we test different initial parameters (initial lower bound for prodigy ADAM 
and $r_0$ for Hot DoG, DoG, and DoWG) over a log scaled grid $\left( 10^k \right)$ for $k=-3, -2, \dots, 1$.

For the logistic regression example, to account for the class imbalance problem, we include all observations from the 
rare positive class if the coreset size is more than twice as big as the total number of observations with 
positive labels. Otherwise we sample our coreset to have $50\%$ positive labels and $50\%$ negative labels. 
Coreset points are uniformly subsampled for all other models.

\section{Additional Results}
\label{sec:appendix_5}
Figs. 4 to 6 in the main text
show the traces of average squared coordinate-wise z-scores, 
as well as the gradient estimate norms and hot-start test statistics for Hot DoG when $M=1000$. In this subsection, we 
show the same sets of plots for $M=100$ and $M=500$. 
Similarly to \cref{fig:burnincomparison}, 
\cref{fig:burnincomparison100,fig:burnincomparison500} compare Hot DoG with and without hot-start test. 
Similarly to \cref{fig:burnintest}, 
\cref{fig:burnintest100,fig:burnintest500} show the gradient estimate norms and hot-start test statistics during burn-in.
Similarly to \cref{fig:tracecombined}, 
\cref{fig:tracecombined100,fig:tracecombined500} compare Hot DoG (with hot-start test) and optimally-tuned ADAM.
We see that all plots show the same trends as the ones in 
\cref{sec:experiments}, 
where $M=1000$.
As a result, we arrive at similar observations as in
\cref{sec:experiments}.
% Section 5. 
In particular, Hot DoG with burn-in leaves the plateau sooner than without burn-in; the hot-start test passes and 
thus burn-in terminates shortly after gradient norms are stabilized; Hot DoG is robust to the fixed parameter $r$.

\begin{figure}[h]
    \begin{subfigure}{0.33\textwidth}
    \includegraphics[width=\columnwidth]{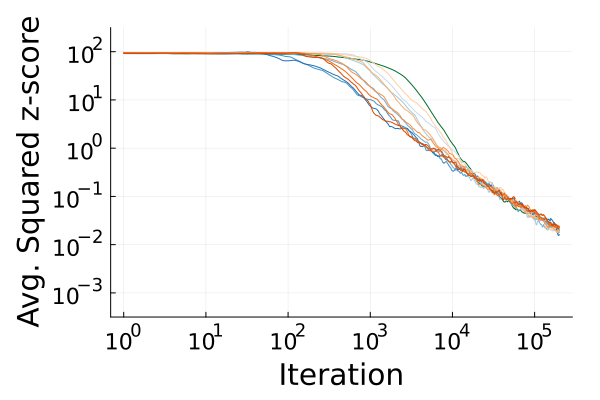}
    \caption{Gaussian location}
    \end{subfigure}
    \begin{subfigure}{0.33\textwidth}
    \includegraphics[width=\columnwidth]{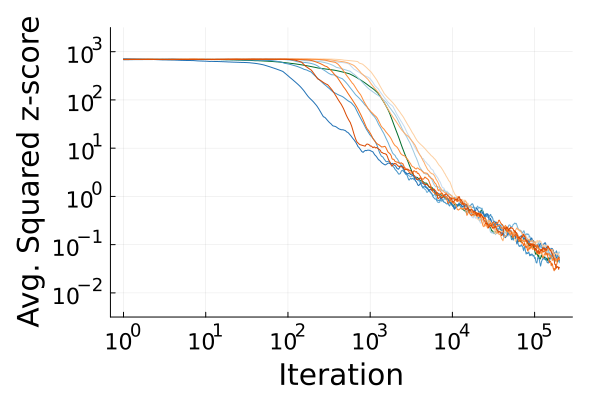}
    \caption{Sparse regression}
    \end{subfigure}
    \begin{subfigure}{0.33\textwidth}
    \includegraphics[width=\columnwidth]{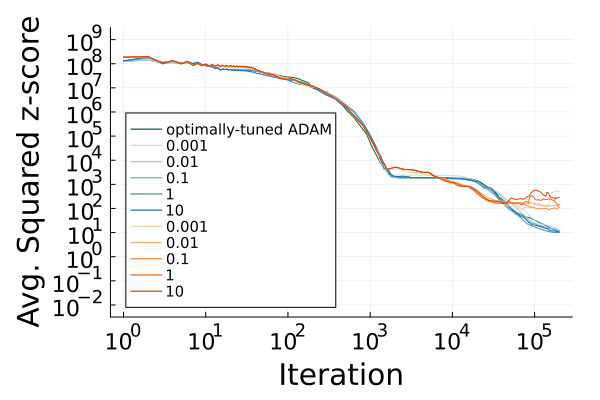}
    \caption{Linear regression}\label{fig:burnincomparison-linear100}
    \end{subfigure}
    \begin{subfigure}{0.33\textwidth}
    \includegraphics[width=\columnwidth]{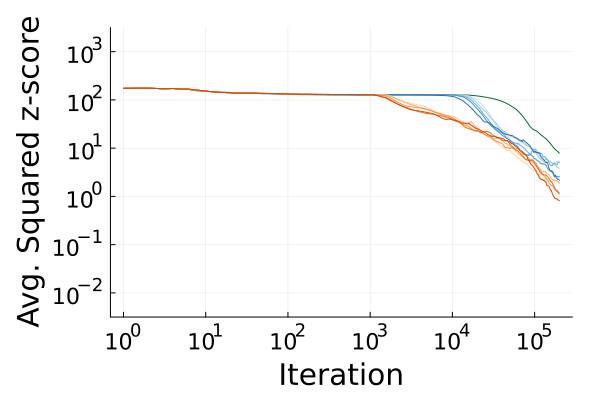}
    \caption{Logistic regression}
    \end{subfigure}
    \begin{subfigure}{0.33\textwidth}
    \includegraphics[width=\columnwidth]{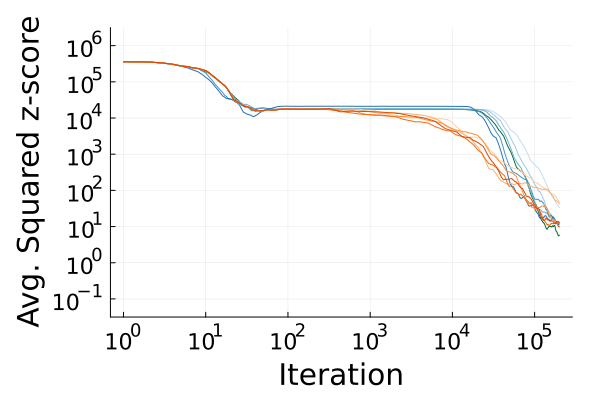}
    \caption{Poisson regression}
    \end{subfigure}
    \begin{subfigure}{0.33\textwidth}
    \includegraphics[width=\columnwidth]{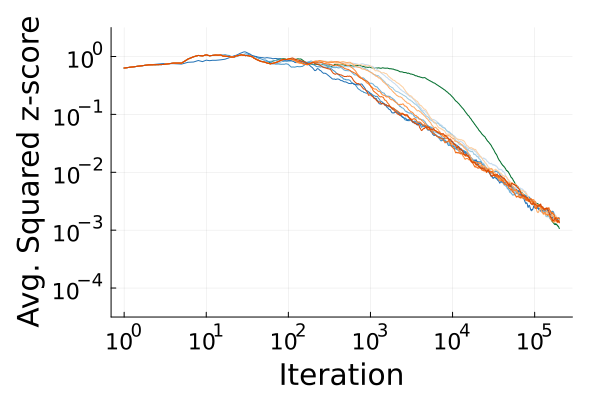}
    \caption{Bradley-Terry}
    \end{subfigure}
    \caption{Traces of average squared coordinate-wise z-scores between the true and approximated posterior 
    across all experiments, obtained using Hot DoG with and without hot-start test. 
    All figures share the legend in \cref{fig:burnincomparison-linear100}. The coreset size $M$ is $100$ and each line
    represents a different initial learning rate parameter. The lines indicate the median from $10$ runs.
    Orange lines indicate runs with hot-start test and blue lines without.}
    \label{fig:burnincomparison100}
\end{figure}

\begin{figure}[h]
    \begin{subfigure}{0.33\textwidth}
    \includegraphics[width=\columnwidth]{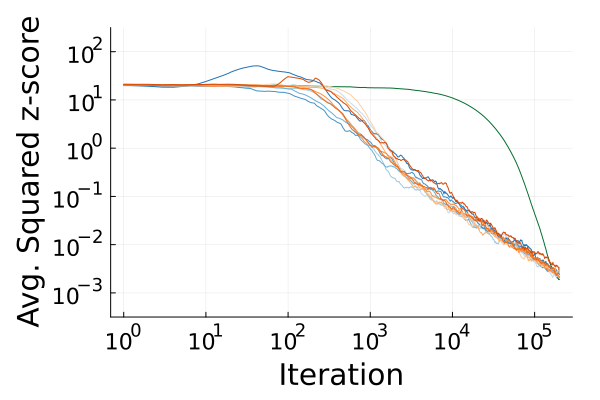}
    \caption{Gaussian location}
    \end{subfigure}
    \begin{subfigure}{0.33\textwidth}
    \includegraphics[width=\columnwidth]{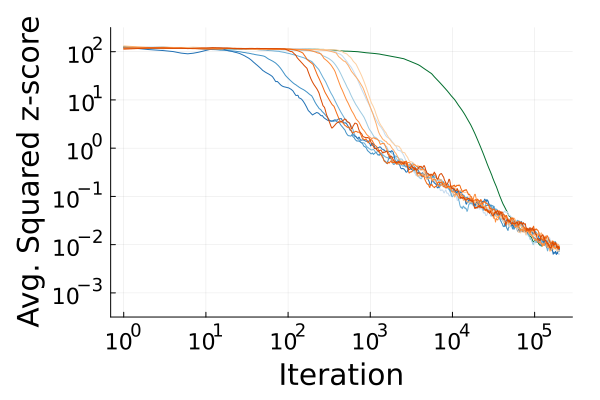}
    \caption{Sparse regression}
    \end{subfigure}
    \begin{subfigure}{0.33\textwidth}
    \includegraphics[width=\columnwidth]{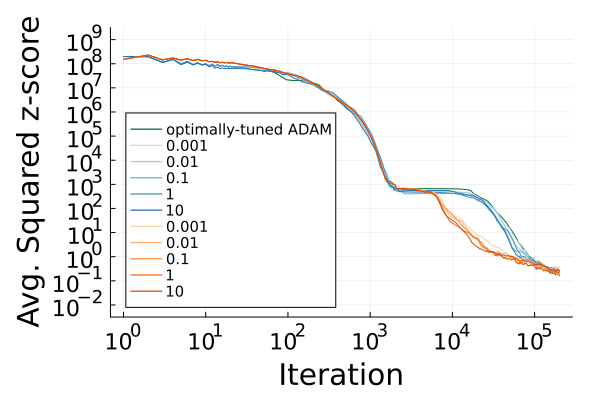}
    \caption{Linear regression}\label{fig:burnincomparison-linear500}
    \end{subfigure}
    \begin{subfigure}{0.33\textwidth}
    \includegraphics[width=\columnwidth]{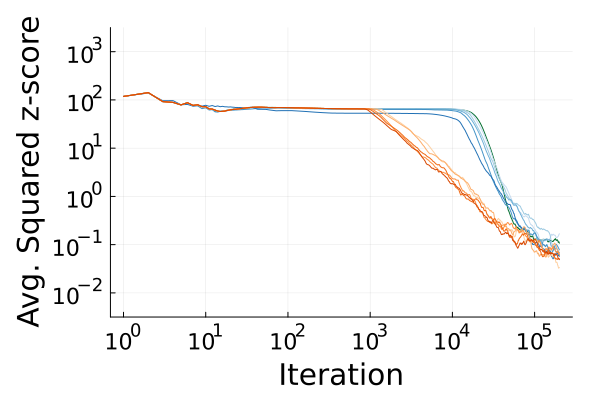}
    \caption{Logistic regression}
    \end{subfigure}
    \begin{subfigure}{0.33\textwidth}
    \includegraphics[width=\columnwidth]{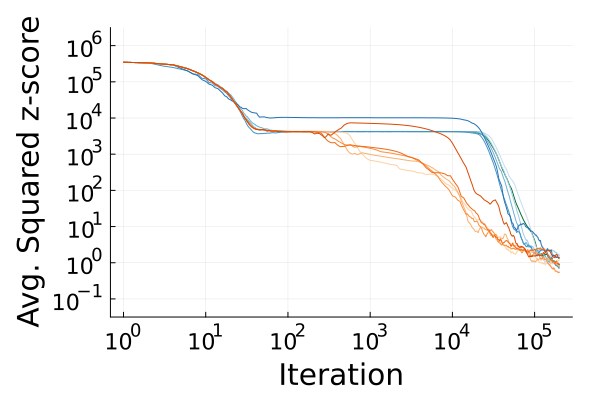}
    \caption{Poisson regression}
    \end{subfigure}
    \begin{subfigure}{0.33\textwidth}
    \includegraphics[width=\columnwidth]{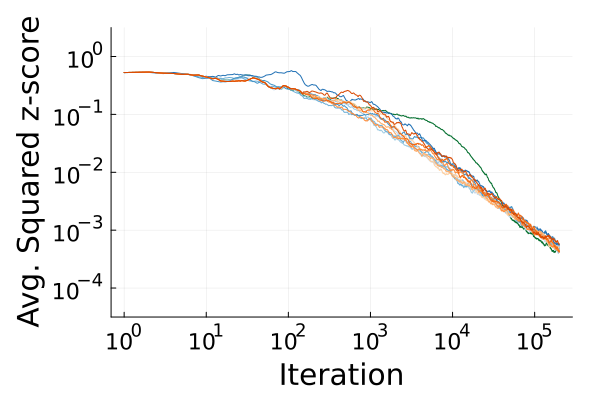}
    \caption{Bradley-Terry}
    \end{subfigure}
    \caption{Traces of average squared coordinate-wise z-scores between the true and approximated posterior 
    across all experiments, obtained using Hot DoG with and without hot-start test. 
    All figures share the legend in \cref{fig:burnincomparison-linear500}. The coreset size $M$ is $500$ and each line
    represents a different initial learning rate parameter. The lines indicate the median from $10$ runs.
    Orange lines indicate runs with hot-start test and blue lines without.}
    \label{fig:burnincomparison500}
\end{figure}

\begin{figure}[h]
    \begin{subfigure}{0.33\textwidth}
        \includegraphics[width=\columnwidth]{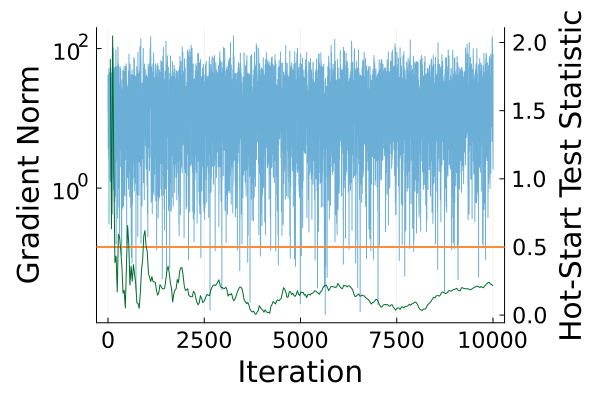}
        \caption{Gaussian location}
    \end{subfigure}
    \begin{subfigure}{0.33\textwidth}
        \includegraphics[width=\columnwidth]{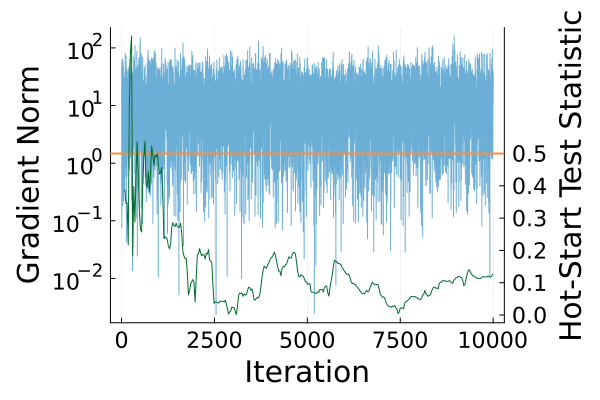}
        \caption{Sparse regression}
    \end{subfigure}
    \begin{subfigure}{0.33\textwidth}
        \includegraphics[width=\columnwidth]{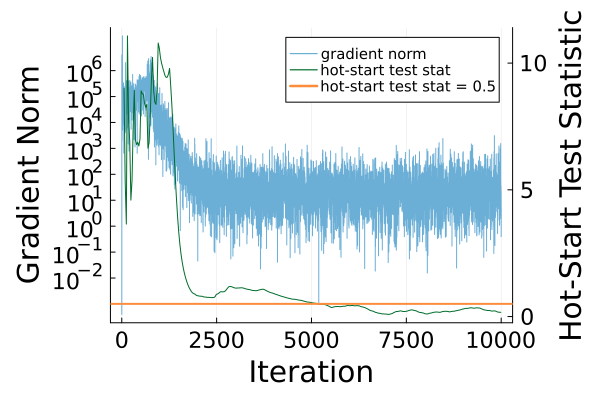}
        \caption{Linear regression}
    \end{subfigure}
    \begin{subfigure}{0.33\textwidth}
        \includegraphics[width=\columnwidth]{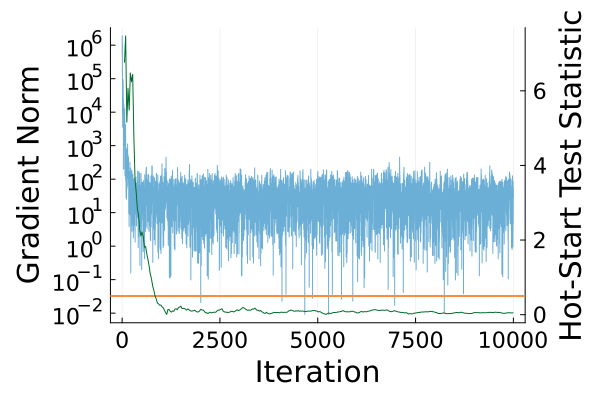}
        \caption{Logistic regression}
    \end{subfigure}
    \begin{subfigure}{0.33\textwidth}
        \includegraphics[width=\columnwidth]{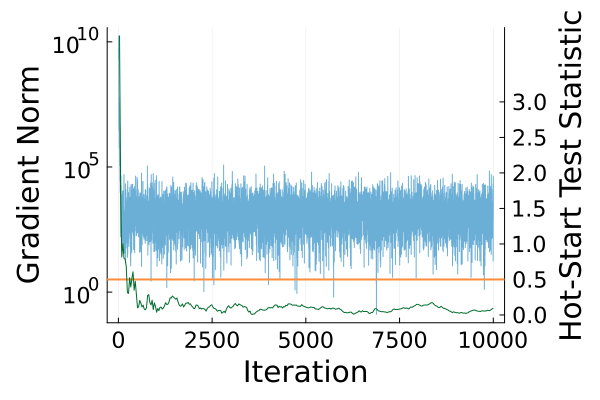}
        \caption{Poisson regression}
    \end{subfigure}
    \begin{subfigure}{0.33\textwidth}
        \includegraphics[width=\columnwidth]{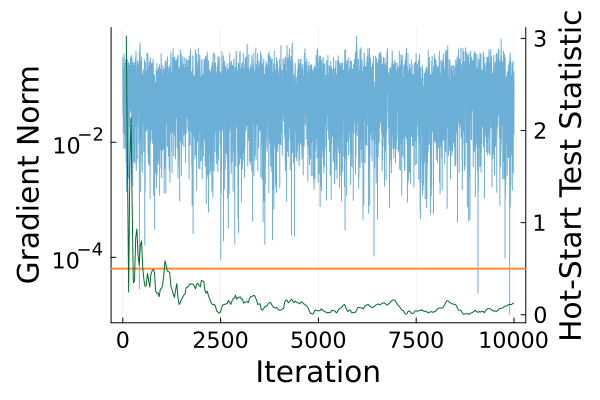}
        \caption{Bradley-Terry}
    \end{subfigure}
    \caption{Trace of gradient estimate norms (blue) and hot-start test statistics (green) before weight optimization
            across all experiments with $M=100$.
            The orange horizontal line is the test statistic threshold $c=0.5$.}
    \label{fig:burnintest100}
\end{figure}

\begin{figure}[h]
    \begin{subfigure}{0.33\textwidth}
        \includegraphics[width=\columnwidth]{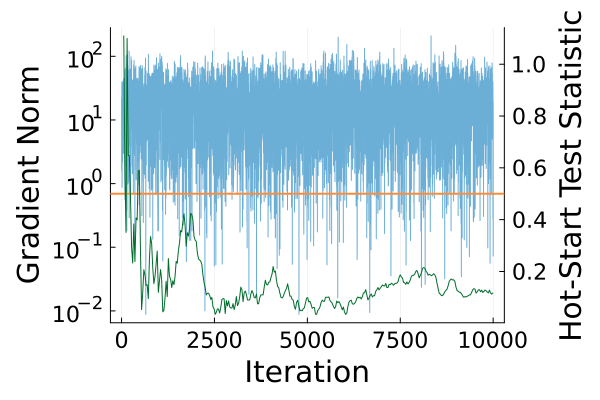}
        \caption{Gaussian location}
    \end{subfigure}
    \begin{subfigure}{0.33\textwidth}
        \includegraphics[width=\columnwidth]{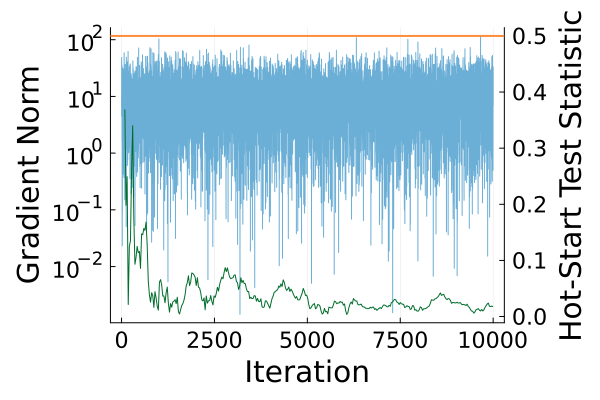}
        \caption{Sparse regression}
    \end{subfigure}
    \begin{subfigure}{0.33\textwidth}
        \includegraphics[width=\columnwidth]{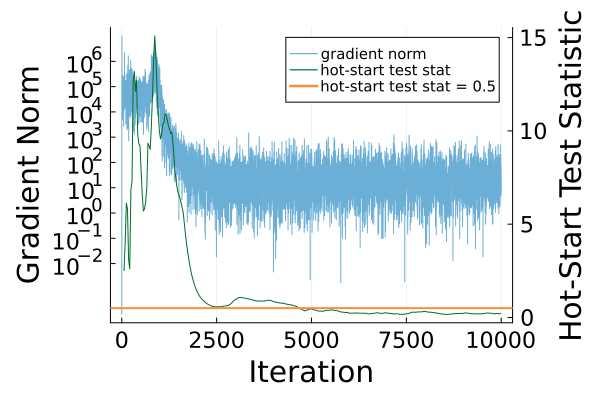}
        \caption{Linear regression}
    \end{subfigure}
    \begin{subfigure}{0.33\textwidth}
        \includegraphics[width=\columnwidth]{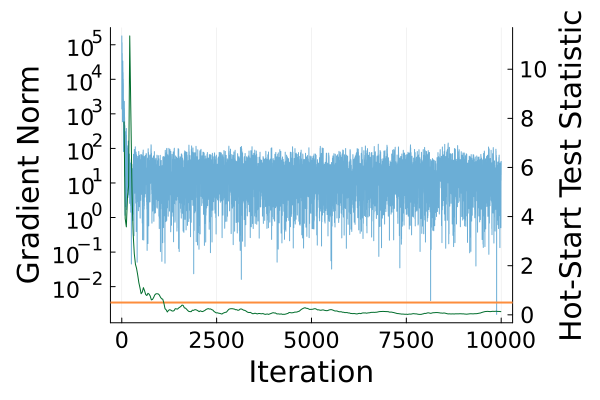}
        \caption{Logistic regression}
    \end{subfigure}
    \begin{subfigure}{0.33\textwidth}
        \includegraphics[width=\columnwidth]{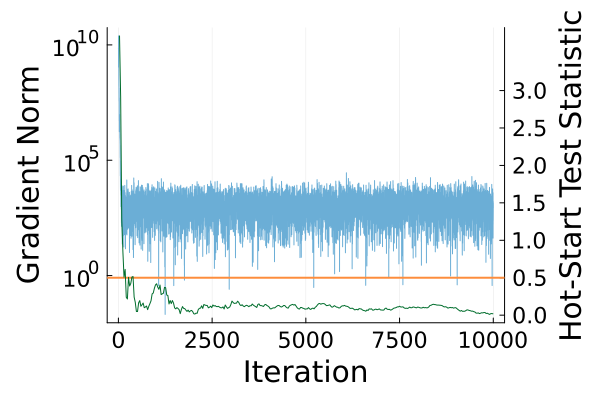}
        \caption{Poisson regression}
    \end{subfigure}
    \begin{subfigure}{0.33\textwidth}
        \includegraphics[width=\columnwidth]{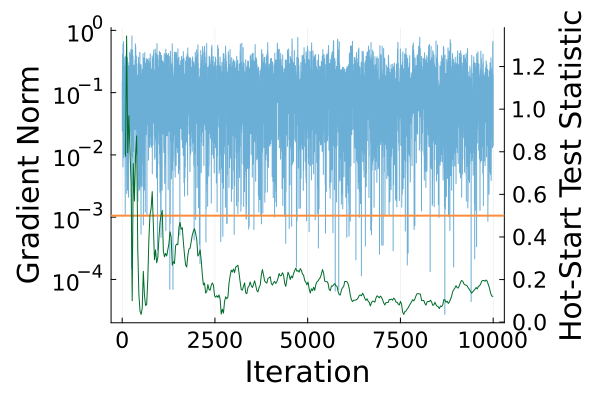}
        \caption{Bradley-Terry}
    \end{subfigure}
    \caption{Trace of gradient estimate norms (blue) and hot-start test statistics (green) before weight optimization
            across all experiments with $M=500$.
            The orange horizontal line is the test statistic threshold $c=0.5$.}
    \label{fig:burnintest500}
\end{figure}

\begin{figure}[h]
    \begin{subfigure}{0.33\textwidth}
    \includegraphics[width=\columnwidth]{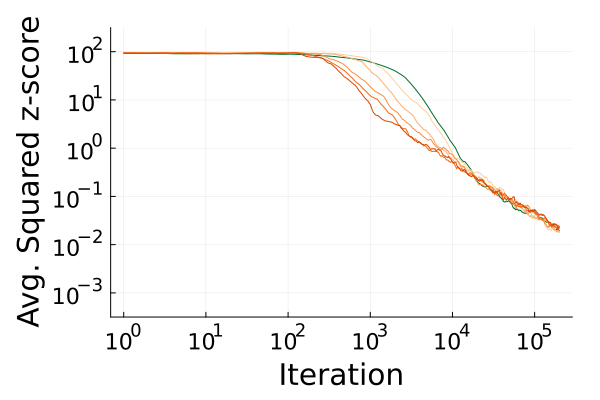}
    \caption{Gaussian location}
    \end{subfigure}
    \begin{subfigure}{0.33\textwidth}
    \includegraphics[width=\columnwidth]{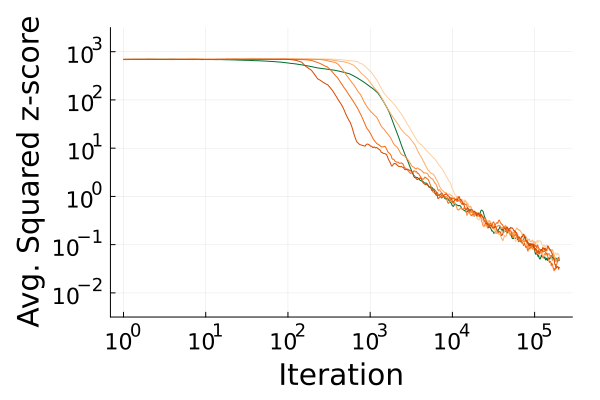}
    \caption{Sparse regression}
    \end{subfigure}
    \begin{subfigure}{0.33\textwidth}
    \includegraphics[width=\columnwidth]{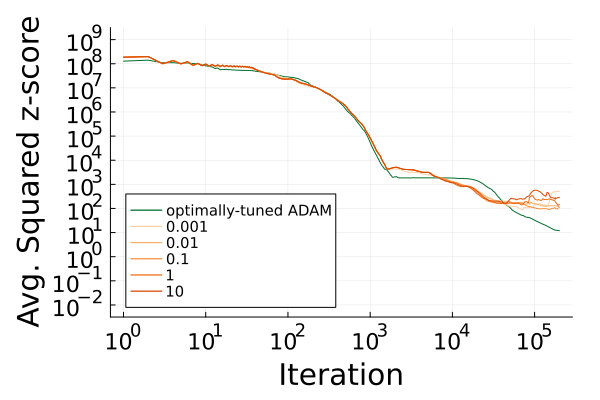}
    \caption{Linear regression}\label{fig:tracecombined-linear100}
    \end{subfigure}
    \begin{subfigure}{0.33\textwidth}
    \includegraphics[width=\columnwidth]{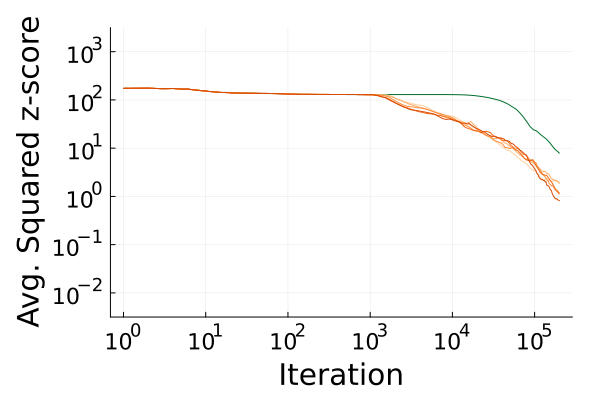}
    \caption{Logistic regression}
    \end{subfigure}
    \begin{subfigure}{0.33\textwidth}
    \includegraphics[width=\columnwidth]{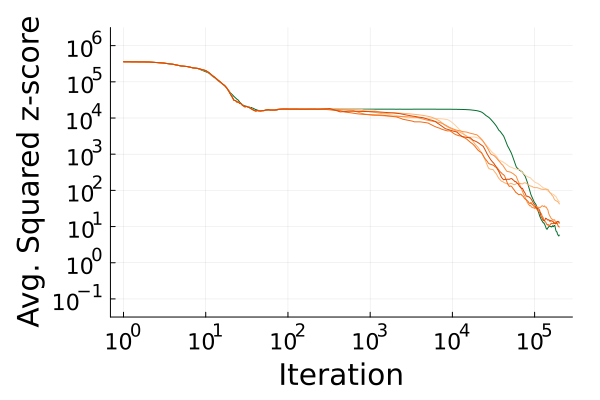}
    \caption{Poisson regression}
    \end{subfigure}
    \begin{subfigure}{0.33\textwidth}
    \includegraphics[width=\columnwidth]{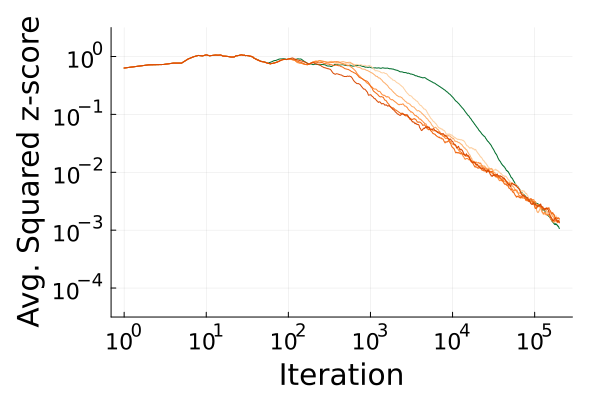}
    \caption{Bradley-Terry}
    \end{subfigure}
    \caption{Traces of average squared coordinate-wise z-scores between the true and approximated posterior 
    across all experiments, obtained from Hot DoG and optimally-tuned ADAM. 
    All figures share the legend in \cref{fig:tracecombined-linear100}.
    The coreset size $M=100$ and each line represents a different initial learning rate parameter. 
    The lines indicate the median from $10$ runs.}
    \label{fig:tracecombined100}
\end{figure}

\begin{figure}[h]
    \begin{subfigure}{0.33\textwidth}
    \includegraphics[width=\columnwidth]{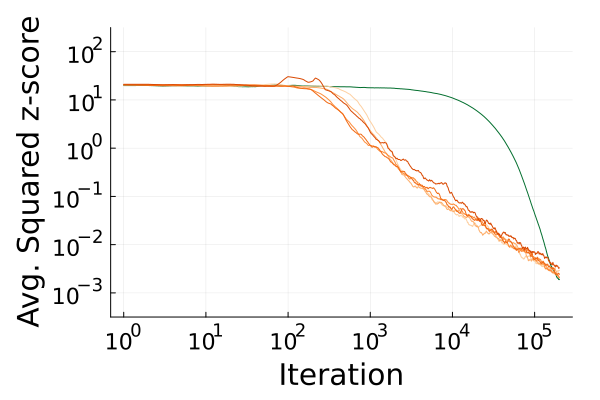}
    \caption{Gaussian location}
    \end{subfigure}
    \begin{subfigure}{0.33\textwidth}
    \includegraphics[width=\columnwidth]{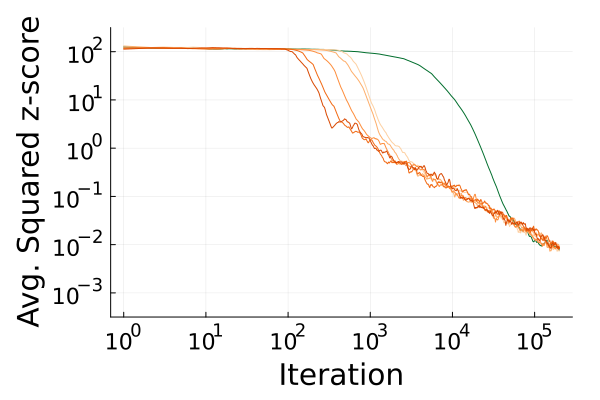}
    \caption{Sparse regression}
    \end{subfigure}
    \begin{subfigure}{0.33\textwidth}
    \includegraphics[width=\columnwidth]{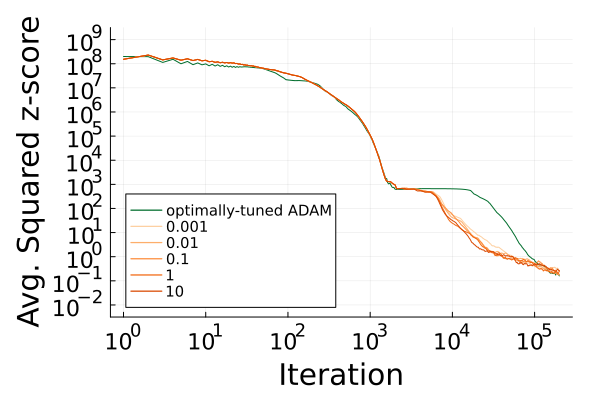}
    \caption{Linear regression}\label{fig:tracecombined-linear500}
    \end{subfigure}
    \begin{subfigure}{0.33\textwidth}
    \includegraphics[width=\columnwidth]{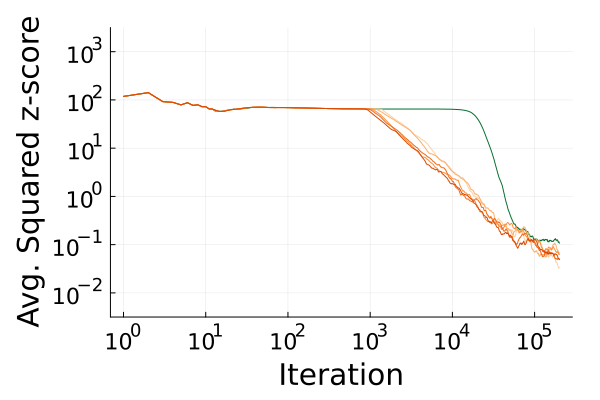}
    \caption{Logistic regression}
    \end{subfigure}
    \begin{subfigure}{0.33\textwidth}
    \includegraphics[width=\columnwidth]{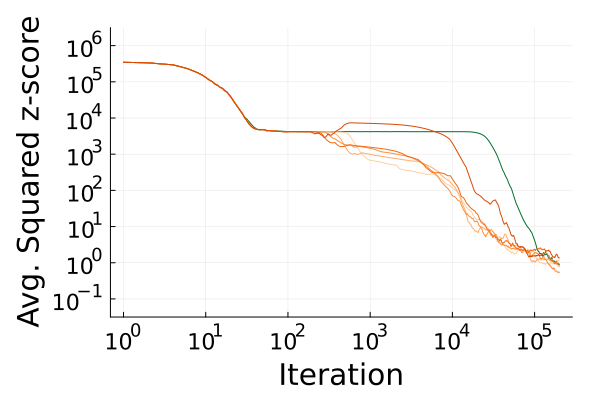}
    \caption{Poisson regression}
    \end{subfigure}
    \begin{subfigure}{0.33\textwidth}
    \includegraphics[width=\columnwidth]{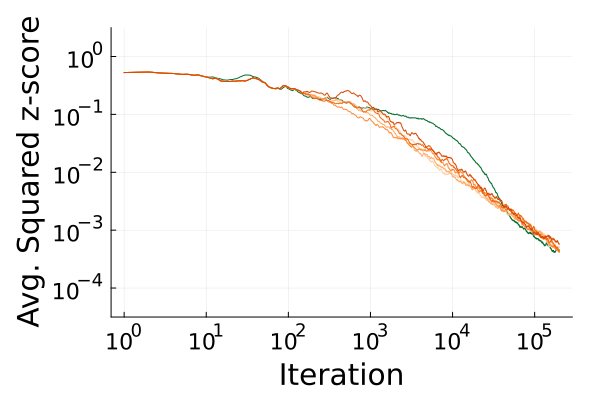}
    \caption{Bradley-Terry}
    \end{subfigure}
    \caption{Traces of average squared coordinate-wise z-scores between the true and approximated posterior 
    across all experiments, obtained from Hot DoG and optimally-tuned ADAM. 
    All figures share the legend in \cref{fig:tracecombined-linear500}.
    The coreset size $M=500$ and each line represents a different initial learning rate parameter. 
    The lines indicate the median from $10$ runs.}
    \label{fig:tracecombined500}
\end{figure}
\clearpage
\section{Hot DoG Convergence}\label{sec:convergence_proof}
\subsection{Update Rule}\label{sec:update_rule}
The Hot DoG update for coordinate $i \in [M]$ at iteration $n\in\nats_+$ can be written as
\[
    m_{t,i} &= \beta_1 m_{t-1, i} + g_i(w_{t-1}, \theta_{t-1}, \mathcal{S}_{t-1}) \label{eq:m_update}\\
    v_{t,i} &= \beta_2 v_{t-1, i} + \left(g_i(w_{t-1}, \theta_{t-1}, \mathcal{S}_{t-1})\right)^2 \label{eq:v_update}\\
    w_{t,i} &= w_{t-1,i} - \alpha_{t,i} \frac{m_{t,i}}{\sqrt{t \cdot (\epsilon + v_{t,i})}},
\]
where
\[
    \alpha_{t,i} &= \frac{1-\beta_1}{1-\beta_1^t} \frac{\sqrt{1-\beta_2^t}}{\sqrt{1-\beta_2}} \tilde{r}_{t,i}\\
    \tilde{r}_{t,i} &= (1-\beta_1^t)^{-1}\left((1-\beta_1)\left(\sum_{k=0}^{t-1}\beta_1^k\bar{r}_{t-k,i}\right) + \beta_1^t\bar{r}_{0,i}\right)\\
    \bar{r}_{t,i} &= \left(\max_{k\leq t}\{ |w_{t,i} - w_{0,i}| \}\right) \vee r_\delta.
\]
We initialize the algorithm such that $m_0=0$ and $v_0=0$.
Define $s_t \in \reals^N$ such that $\forall j \in \mathcal{S}_t, s_{tj} = \frac{N}{S}$ and $0$ otherwise.
The $M$-dimensional subsampled gradient estimate as defined in \cref{eq:gradest} then takes the form
\[
    g(w_{t-1}, \theta_{t-1}, \mathcal{S}_{t-1}) = G_{t-1}(w_{t-1} - w^\star) + H_{t-1}(1-s_{t-1}), \label{eq:gradient}
\]
where  
\[
    G_{t\!-\!1} \!=\! \frac{1}{K\!-\!1} \!\sum_{k=1}^K\! \begin{bmatrix} \bar{\ell}_1 (\theta_{(t\!-\!1)k}) \\ \vdots \\ \ell_M (\theta_{(t-1)k}) \end{bmatrix}
                                        \begin{bmatrix} \bar{\ell}_1 (\theta_{(t\!-\!1)k}) \\ \vdots \\ \ell_M (\theta_{(t-1)k}) \end{bmatrix}^\top \!\in\! \reals^{M\!\times\! M},
    H_{t\!-\!1} \!=\! \frac{1}{K\!-\!1}\!\sum_{k=1}^K\!
                \bbmat \bar\ell_1(\theta_{(t\!-\!1)k})\\ \vdots \\ \bar\ell_M(\theta_{(t-1)k})\ebmat
                \bbmat \bar\ell_1(\theta_{(t\!-\!1)k})\\ \vdots \\ \bar\ell_N(\theta_{(t-1)k})\ebmat^\top
                \!\in\! \reals^{M\!\times\! N}.
\]
Note that in \cref{eq:gradient}, both matrix-vector products on the right hand side give us vectors of dimension $M$, 
which aligns with the desired dimension of the gradient estimate.
We also define here two quantities that improves the readability of proofs presented in following subsections:
\[
    R_{t-1} = \begin{bmatrix}\frac{\alpha_{t,1}}{\sqrt{t\cdot (\epsilon + v_{t,1})}} & \cdots & \frac{\alpha_{t,M}}{\sqrt{t\cdot (\epsilon + v_{t,M})}}\end{bmatrix}^\top, \quad\quad
    \Delta_{t-j} = w_{t-j-1} - w_{t-j} = \alpha_{t-j} \odot \frac{m_{t-j}}{\sqrt{(t-j)\cdot \left(\epsilon + v_{t-j}\right)}}. \label{eq:simple_notation}
\]
\subsection{Assumptions}\label{sec:assumptions}
\begin{assumption}[Coreset weight constraint]\label{assump:constraint}
    $\mathcal{W} = \{w \in \reals^M: w_t \geq 0, \sum_{m=1}^M w_{tm} \leq B\}$.
\end{assumption}
\begin{assumption}[Exact coreset]\label{assump:exact}
    There exists a $w^\star \in \reals^M, c^\star \in \reals$ such that $w^\star \in \mathcal{W}$ and 
    \[
        \sum_{n=1}^N \ell_n(\cdot) = \sum_{m=1}^M w_m^\star \ell_m(\cdot) + c^\star \quad \pi_0 - a.e.v.
    \]
\end{assumption}
\begin{assumption}[Bounded gradient]\label{assump:grad_bound}
    There exists $U>0$ such that
    \[
        \forall w_t \in \mathcal{W}, \theta_t \in \Theta^K, \mathcal{S}_{t} \subseteq [N] \,\,\,\, 
        \|g(w_t,\theta_t, \mathcal{S}_t)\|_\infty \leq U.
    \]
\end{assumption}
\begin{assumption}[Markov gradient mixing]\label{assump:mixing}
    There exists $\lambda>0$ such that 
    \[
        \forall w_t \in \mathcal{W}, \theta_{t-1} \in \Theta^K \,\,\,\, 
        \E\left[ G_t \middle | w_t, \theta_{t-1} \right] \succeq \lambda I.
    \]
\end{assumption}
\begin{assumption}[Markov gradient noise boundedness]\label{assump:noise}
    There exists $0<\bar{\lambda}<\infty$ such that 
    \[
        \forall w_t, w_{t-j} \in \mathcal{W}, \theta_{t-1}, \theta_{t-j-i} \in \Theta^K \,\,\,\, 
        \E\left[ G_{t-j}^\top G_{t} \middle | w_t, \theta_{t-1}, w_{t-j}, \theta_{t-j-1} \right] \preceq \bar{\lambda} I.
    \]
\end{assumption}

\subsection{Convergence Proof}\label{sec:proof}
\begin{proof}[Proof of \cref{thm:convergence}]
We begin by applying the projected gradient update to get
\[
    \|w_t - w^\star\|^2 
    &= \left\| \proj_\mathcal{W} \left( w_{t-1} - \alpha_t \odot \frac{m_t}{\sqrt{t \cdot (\epsilon + v_t)}} \right) - w^\star \right\|^2\\
    &= \left\| \proj_\mathcal{W} \left( w_{t-1} - \alpha_t \odot \frac{m_t}{\sqrt{t \cdot (\epsilon + v_t)}} \right) - \proj_\mathcal{W} w^\star \right\|^2\\
    &\leq \left\| w_{t-1} - \alpha_t \odot \frac{m_t}{\sqrt{t \cdot (\epsilon + v_t)}} - w^\star \right\|^2. \label{eq:one_step_update}
\]
Here $\odot$ denotes element-wise multiplication, and the fraction $\frac{m_t}{\sqrt{t \cdot (\epsilon + v_t)}}$ is also applied element-wise. The second equality follows because
$w^\star \in \mathcal{W}$ by assumption. The inequality follows because $\mathcal{W}$ defined in \cref{assump:constraint} 
is convex and closed, and hence $\proj_{\mathcal{W}}$ is a contraction.
We unroll $m_t$ by \cref{eq:m_update} and use $R_{t-1}$ as defined in \cref{eq:simple_notation} to get
\[
    \alpha_t \odot \frac{m_t}{\sqrt{t \cdot (\epsilon + v_t)}} = 
    \diag(R_{t-1}) \sum_{k=0}^{t-1}\beta_1^k g(w_{t-k-1}, \theta_{t-k-1}, \mathcal{S}_{t-k-1}).\label{eq:element_update}
\]
By substituting \cref{eq:gradient,eq:element_update} into \cref{eq:one_step_update} and taking expectations on both sides, we get
\[
    &\E\|w_t - w^\star\|^2\\
    &\leq \E\left[\left\| \left((w_{t-1} \!-\! w^\star) \!-\! \diag(R_{t-1})\sum_{k=0}^{t-1}\beta_1^k G_{t-k-1}(w_{t-k-1} \!-\! w^\star)\right) \!-\! \left(\diag(R_{t-1})\sum_{k=0}^{t-1}\beta_1^k H_{t-k-1}(1\!-\!s_{t-k-1})\right) \right\|^2\right]\\
    &= \E\left[\left\| (w_{t-1} - w^\star) - \diag(R_{t-1})\sum_{k=0}^{t-1}\beta_1^k G_{t-k-1}(w_{t-k-1} - w^\star) \right\|^2\right] \\
    &\quad - 2\E\left[\left((w_{t-1} \!-\! w^\star) \!-\! \diag(R_{t-1})\sum_{k=0}^{t-1}\beta_1^k G_{t-k-1}(w_{t-k-1} \!-\! w^\star)\right)^\top \left(\diag(R_{t-1})\sum_{k=0}^{t-1}\beta_1^k H_{t-k-1}(1\!-\!s_{t-k-1})\right)\right]\\
    &\quad + \E\left[\left\| \diag(R_{t-1})\sum_{k=0}^{t-1}\beta_1^k H_{t-k-1}(1-s_{t-k-1}) \right\|^2\right]\\
    &= \E\left[\left\| (w_{t-1} \!-\! w^\star) \!-\! \diag(R_{t-1})\sum_{k=0}^{t-1}\beta_1^k G_{t-k-1}(w_{t-k-1} \!-\! w^\star) \right\|^2\right]
        + \E\left[\left\| \diag(R_{t-1})\sum_{k=0}^{t-1}\beta_1^k H_{t-k-1}(1\!-\!s_{t-k-1}) \right\|^2\right].
    \label{eq:explicit_update_1}
\]
In the above, the last equality follows due to unbiased subsampling, i.e., for all $t$, $\E[1-s_t] = 0$.
We now rewrite the first term in \cref{eq:explicit_update_1} as follows:
\[
    &\E\left[\left\| (w_{t-1} \!-\! w^\star) \!-\! \diag(R_{t-1})\sum_{k=0}^{t-1}\beta_1^k G_{t-k-1}(w_{t-k-1} \!-\! w^\star) \right\|^2\right]\\
    &=\E\left[ \left\| (w_{t-1} - w^\star) - \diag(R_{t-1})\sum_{k=0}^{t-1}\beta_1^k G_{t-k-1}(w_{t-k-1} - w^\star + w_{t-1} - w_{t-1}) \right\|^2 \right]\\
    &=\E\left[ \left\| (w_{t-1} - w^\star) - \diag(R_{t-1})\sum_{k=0}^{t-1}\beta_1^k G_{t-k-1}(w_{t-1} - w^\star) - \diag(R_{t-1})\sum_{k=0}^{t-1}\beta_1^k G_{t-k-1}(w_{t-k-1} - w_{t-1}) \right\|^2 \right]\\
    &=\E\left[ \left\| \left(I - \diag(R_{t-1})\sum_{k=0}^{t-1}\beta_1^k G_{t-k-1}\right)(w_{t-1} - w^\star) - \diag(R_{t-1})\sum_{k=1}^{t-1}\beta_1^k G_{t-k-1}\left( \sum_{j=1}^k \Delta_{t-j} \right) \right\|^2 \right],\label{eq:explicit_update}
\]
where $\Delta_{t-j}$ is as defined in \cref{eq:simple_notation}. The last equality above follows by rewriting $w_{t-k-1} - w_{t-1}$ as a telescoping sum.
Now let $A_t = \left(I - \diag(R_{t-1})\sum_{k=0}^{t-1}\beta_1^k G_{t-k-1}\right)$, $b_t= \diag(R_{t-1})\sum_{k=1}^{t-1}\beta_1^k G_{t-k-1}\left( \sum_{j=1}^k \Delta_{t-j} \right)$, and
    $c_t = \diag(R_{t-1})\sum_{k=0}^{t-1}\beta_1^k H_{t-k-1}(1-s_{t-k-1})$.
\cref{eq:explicit_update_1} then becomes
\[
    &\E \|w_t - w^\star\|^2\\
    &\leq \E \left[ \|A_t (w_{t-1} - w^\star) - b_t \|^2 \right] + \E\left[\| c_t \|^2 \right]\\
    &=\E \left[ \E \left[ (w_{t-1} - w^\star)^\top A_t^\top A_t (w_{t-1} - w^\star) - 2b_t^\top A_t (w_{t-1} - w^\star) + b_t^\top b_t \middle | w_{t-1}\right] \right] + \E\left[\| c_t \|^2 \right]\\
    &\leq\E \left[ (w_{t-1} - w^\star)^\top \E\left[ A_t^\top A_t \middle | w_{t-1} \right] (w_{t-1} - w^\star) + 2\left | \E\left[ b_t^\top A_t (w_{t-1} - w^\star) \middle | w_{t-1} \right] \right | + \E\left[ \|b_t\|^2 \middle | w_{t-1} \right] \right] + \E\left[\| c_t \|^2 \right]\\
    &\leq\E \left[ (w_{t-1} - w^\star)^\top \E\left[ A_t^\top A_t \middle | w_{t-1} \right] (w_{t-1} - w^\star)
        + 2 \sqrt{\E\left[ \|b_t\|^2 \middle | w_{t-1} \right]}
                    \sqrt{\E\left[ \|A_t (w_{t-1} - w^\star)\|^2 \middle | w_{t-1} \right]} \right]\\
        &\quad+ \E\left[\E\left[ \|b_t\|^2 \middle | w_{t-1} \right] \right] + \E\left[\| c_t \|^2 \right]\\
    &= \E \left[ (w_{t-1} \!-\! w^\star)^\top \E\left[ A_t^\top A_t \middle | w_{t-1} \right] (w_{t-1} \!-\! w^\star)
    \!+\! 2 \sqrt{\E\left[ \|b_t\|^2 \middle | w_{t-1} \right]}
                \sqrt{ (w_{t-1} \!-\! w^\star)^\top \E\left[A_t^\top A_t \middle | w_{t-1}\right] (w_{t-1} \!-\! w^\star)} \right]\\
    &\quad+ \E\left[\E\left[ \|b_t\|^2 \middle | w_{t-1} \right] \right] + \E\left[\| c_t \|^2 \right],
\]
where the last inequality is by Cauchy-Schwartz.
By \cref{lem:At,lem:bt,lem:ct}, we know that there exists $T^\star < \infty$ and $C_1, C_2>0$ such that for all $t > T^\star$, 
\[
    \E\left[ A_t^\top A_t \middle | w_{t-1} \right] \preceq \exp\left( -\frac{D}{\sqrt{t}} \right)I, \quad
    \E\left[ \|b_t\|^2 \middle | w_{t-1} \right] \leq \frac{C_1}{t^2},\quad
    \E\left[ \|c_t\|^2 \right] \leq \frac{C_2}{t}.
\]
Here $D=\frac{\lambda(1-\beta_1)r_\delta}{2\sqrt{\epsilon+(1-\beta_2)^{-1}U^2}}$ is as defined in \cref{lem:At}. 
We know $e^{-D/\sqrt{t}}\leq 1$. By \cref{assump:constraint}, we also have that for all $t \geq 1$, $\|w_{t-1}-w^\star\|^2 \leq \sum_{m=1}^M B^2 = MB^2$.
Therefore, 
\[
    \E \|w_t - w^\star\|^2 &\leq e^{-D/\sqrt{t}}\E\|w_{t-1} - w^\star\|^2 + 2\E\left[\sqrt{\frac{C_1}{t^2}}\sqrt{\exp\left( -\frac{D}{\sqrt{t}} \right) \|w_{t-1} - w^\star\|^2}\right] + \frac{C_1}{t^2} + \frac{C_2}{t}\\
    &\leq e^{-D/\sqrt{t}}\E\|w_{t-1} - w^\star\|^2 + 2\frac{B\sqrt{MC_1}}{t} + \frac{C_1}{t^2} + \frac{C_2}{t}\\
    &\leq e^{-D/\sqrt{t}}\E\|w_{t-1} - w^\star\|^2 + \frac{2B\sqrt{MC_1}+C_1+C_2}{t}.
\]
We unroll this recursion backward from $t$ to $T^\star$ to get
\[
    \E \|w_t - w^\star\|^2 &\leq e^{-D\sum_{\tau=T^\star+1}^{t}\frac{1}{\sqrt{\tau}}}\E\left[\left\| w_{T^\star}-w^\star \right\|^2\right] + \left( 2B\sqrt{MC_1}+C_1+C_2 \right) \sum_{\tau=T^\star+1}^{t} \frac{1}{\tau} e^{-D\sum_{u=\tau+1}^{t}\frac{1}{\sqrt{u}}}\\
    &\leq MB^2 e^{-D\sum_{\tau=T^\star+1}^{t}\frac{1}{\sqrt{\tau}}} + \left( 2B\sqrt{MC_1}+C_1+C_2 \right) \sum_{\tau=T^\star+1}^{t} \frac{1}{\tau} e^{-D\sum_{u=\tau+1}^{t}\frac{1}{\sqrt{u}}},
\]
where the last inequality again uses $\|w_{T^\star}-w^\star\|^2 \leq = MB^2$.
Since $\frac{1}{\sqrt{\tau}}$ monotonically decreases in $\tau$, we have that
$\sum_{\tau=T^\star+1}^{t} \frac{1}{\tau} \geq \int_{T^\star+1}^{t} \frac{1}{\sqrt{\tau}} d\tau = 2\left(\sqrt{t} - \sqrt{T^\star+1}\right)$.
Therefore, as $t\to\infty$,
\[
    \E \|w_t - w^\star\|^2 &\leq MB^2 e^{-2D(\sqrt{t} - \sqrt{T^\star+1})} + (2B\sqrt{MC_1}+C_1+C_2)\sum_{\tau=T^\star+1}^{t}\frac{1}{\tau}e^{-2D(\sqrt{t} - \sqrt{\tau+1})}\\
    &\leq MB^2e^{2D\sqrt{T^\star+1}} e^{-2D\sqrt{t}} + \left(2B\sqrt{MC_1}+C_1+C_2\right)e^{-2D\sqrt{t}}\sum_{\tau=1}^{t}\frac{1}{\tau}e^{2D\sqrt{\tau+1}}\\
    &= O\left( e^{-2D\sqrt{t}} + e^{-2D\sqrt{t}}\sum_{\tau=1}^{t}\frac{1}{\tau}e^{2D\sqrt{\tau+1}} \right). \label{eq:first_big_o}
\]
It is obvious that $e^{-2D\sqrt{t}} = O\left( \frac{1}{\sqrt{t}} \right)$ as $t \to\infty$. It remains to show that $e^{-2D\sqrt{t}}\sum_{\tau=1}^{t}\frac{1}{\tau}e^{2D\sqrt{\tau+1}} = O\left( \frac{1}{\sqrt{t}} \right)$ as $t\to\infty$.
We begin by noting that, since $\forall \tau \geq 1$, $\frac{\tau+1}{\tau} \leq 2$,
\[
    \sum_{\tau=1}^{t}\frac{1}{\tau}e^{2D\sqrt{\tau+1}} &= \sum_{\tau=1}^{t}\frac{1}{\tau}e^{2D\sqrt{\tau+1}}\frac{\tau}{\tau+1}\frac{\tau+1}{\tau}
    \leq 2\sum_{\tau=1}^{t}\frac{1}{\tau+1}e^{2D\sqrt{\tau+1}}
    = 2\sum_{\tau=2}^{t+1}\frac{1}{\tau}e^{2D\sqrt{\tau}}.
\]
We can then equivalently show $2e^{-2D\sqrt{t}}\sum_{\tau=2}^{t+1}\frac{1}{\tau}e^{2D\sqrt{\tau}} = O\left( \frac{1}{\sqrt{t}} \right)$ as $t\to\infty$.
We know that there exists $T' < \infty$ such that for all $\tau\geq T'$, $\frac{1}{\tau}e^{2D\sqrt{\tau}}$ monotonically increases with $\tau$. We therefore split the sum at $g(t)$, with $T' \leq g(t) \leq t$, to get 
\[
    2e^{-2D\sqrt{t}}\sum_{\tau=2}^{t+1}\frac{1}{\tau}e^{2D\sqrt{\tau}} &= 2e^{-2D\sqrt{t}}\sum_{\tau=2}^{g(t)-1}\frac{1}{\tau}e^{2D\sqrt{\tau}} + 2e^{-2D\sqrt{t}}\sum_{\tau=g(t)}^{t+1}\frac{1}{\tau}e^{2D\sqrt{\tau}}. \label{eq:split}
\]
We can bound the first term in \cref{eq:split} as follows
\[
    2e^{-2D\sqrt{t}}\sum_{\tau=2}^{g(t)-1}\frac{1}{\tau}e^{2D\sqrt{\tau}} \leq 2e^{2D\left( \sqrt{g(t)} - \sqrt{t} \right)}\sum_{\tau=2}^{g(t)-1}\frac{1}{\tau}
    \leq 2e^{2D\left( \sqrt{g(t)} - \sqrt{t} \right)}\left( \ln\left(g(t)\right) + 1 \right). \label{eq:condition_1}
\]

Looking at the second term in \cref{eq:split}, since $\tau \geq g(t)$ is large enough that the summand monotonically increases with $\tau$,
\[
    2e^{-2D\sqrt{t}}\sum_{\tau=g(t)}^{t+1}\frac{1}{\tau}e^{2D\sqrt{\tau}} \leq 2e^{-2D\sqrt{t}}\int_{g(t)}^{t+1}\frac{e^{2D\sqrt{\tau}}}{\tau}d\tau = 4e^{-2D\sqrt{t}}\int_{\sqrt{g(t)}}^{\sqrt{t+1}} \frac{e^{2Ds}}{s} ds, \label{eq:second_final_bound}
\]
Where the last equality follows by setting $s = \sqrt{\tau}$, $\tau = s^2$, $d\tau = 2s ds$.
Now for the integral in \cref{eq:second_final_bound}, we integrate by parts by defining $y = \frac{1}{s}$ and $dv = e^{2Ds}ds$:
\[
    \int_{\sqrt{g(t)}}^{\sqrt{t+1}} \frac{e^{2Ds}}{s} ds &= \frac{1}{2D\sqrt{t+1}}e^{2D\sqrt{t+1}} - \frac{1}{2D\sqrt{g(t)}}e^{2D\sqrt{g(t)}} + \frac{1}{2D}\int_{\sqrt{g(t)}}^{\sqrt{t+1}}\frac{e^{2Ds}}{s^2}ds\\
    &\leq \frac{1}{2D\sqrt{t+1}}e^{2D\sqrt{t+1}} + \frac{e^{2D\sqrt{t+1}}}{2D}\int_{\sqrt{g(t)}}^{\sqrt{t+1}}\frac{1}{s^2}ds\\
    &= \frac{1}{2D\sqrt{t+1}}e^{2D\sqrt{t+1}} + \frac{e^{2D\sqrt{t+1}}}{2D}\left( \frac{1}{\sqrt{g(t)}} - \frac{1}{\sqrt{t+1}} \right).
\]
Substituting the above back into \cref{eq:second_final_bound} to get
\[
    2e^{-2D\sqrt{t}}\sum_{\tau=g(t)}^{t+1}\frac{1}{\tau}e^{2D\sqrt{\tau}} &\leq 4e^{-2D\sqrt{t}}\left( \frac{1}{2D\sqrt{t+1}}e^{2D\sqrt{t+1}} + \frac{e^{2D\sqrt{t+1}}}{2D}\left( \frac{1}{\sqrt{g(t)}} - \frac{1}{\sqrt{t+1}} \right) \right)\\
    &= \frac{2}{D\sqrt{t+1}}e^{2D(\sqrt{t+1} - \sqrt{t})} + \frac{2}{D}e^{2D(\sqrt{t+1} - \sqrt{t})}\left( \frac{1}{\sqrt{g(t)}} - \frac{1}{\sqrt{t+1}} \right)\\
    &\leq \frac{2e^{2D}}{D\sqrt{t+1}} + \frac{2e^{2D}}{D}\left( \frac{1}{\sqrt{g(t)}} - \frac{1}{\sqrt{t+1}} \right). \label{eq:condition_2}
\]
Let $g(t) = \frac{t}{2}$. Then $T' \leq g(t) \leq t$ is satisfied for all $t \geq 2T'$. We can then combine \cref{eq:condition_1,eq:condition_2}, and have that for all $t \geq 2T'$,
\[
    2e^{-2D\sqrt{t}}\sum_{\tau=2}^{t+1}\frac{1}{\tau}e^{2D\sqrt{\tau}} &\leq 2e^{2D\left( \sqrt{t/2} - \sqrt{t} \right)}\left( \ln\left(\frac{t}{2}\right) + 1 \right) + \frac{2e^{2D}}{D\sqrt{t+1}} + \frac{2e^{2D}}{D}\left( \frac{1}{\sqrt{t/2}} - \frac{1}{\sqrt{t+1}} \right)\\
    &= 2e^{-2D\left( 1-\frac{1}{\sqrt{2}} \right)\sqrt{t}} \left( \ln t - \ln 2 + 1 \right) + \frac{2e^{2D}}{D\sqrt{t+1}} + \frac{4e^{2D}}{D\sqrt{t}},
\]
which is $O\left( \frac{1}{\sqrt{t}} \right)$ as $t\to\infty$.
Therefore, we arrive at the desired result that as $t\to\infty$,
\[
    \E \|w_t - w^\star\|^2 \leq O\left( e^{-2D\sqrt{t}} + 2e^{-2D\left( 1-\frac{1}{\sqrt{2}} \right)\sqrt{t}} \left( \ln t - \ln 2 + 1 \right) + \frac{2e^{2D}}{D\sqrt{t+1}} + \frac{4e^{2D}}{D\sqrt{t}} \right) = O\left( \frac{1}{\sqrt{t}} \right).
\]

\end{proof}

\subsection{Useful Lemmas}
We used several lemmas in the above proof of \cref{thm:convergence}. In this subsection, we present the proof of these lemmas.

\begin{lemma}\label{lem:At}
    Suppose \cref{assump:constraint,assump:exact,assump:grad_bound,assump:mixing,assump:noise} hold. 
    Define $D = \frac{\lambda(1-\beta_1)r_\delta}{2\sqrt{\epsilon+(1-\beta_2)^{-1}U^2}}$. 
    There exists $T<\infty$ such that $\forall t\geq T$,
    \[
        \E\left[ \left( I - \diag(R_{t-1})\sum_{k=0}^{t-1}\beta_1^k G_{t-k-1}\right)^\top\left( I - \diag(R_{t-1})\sum_{k=0}^{t-1}\beta_1^k G_{t-k-1}\right) \middle | w_{t-1} \right] \preceq \left(\exp\left( -\frac{D}{\sqrt{t}} \right)\right)I.
    \]
\end{lemma}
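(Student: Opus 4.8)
The plan is to reduce the matrix inequality to a scalar drift estimate. Write $A_t = I - \diag(R_{t-1})S_t$ with $S_t \coloneqq \sum_{k=0}^{t-1}\beta_1^k G_{t-k-1}$; each $G_{t-k-1}$ is a nonnegative combination of rank-one outer products, hence PSD, so $S_t \succeq G_{t-1} \succeq 0$. For a fixed unit vector $u$, expand $\|A_t u\|^2 = 1 - 2\,u^\top\diag(R_{t-1})S_t u + \|\diag(R_{t-1})S_t u\|^2$. Since $e^{-D/\sqrt t} \ge 1 - D/\sqrt t$, it suffices to show, uniformly in $u$, that the conditional expectation of the middle term is at least of order $D/\sqrt t$ with enough margin while the last term is $O(1/t)$; taking a supremum over $u$ and choosing a finite threshold $T$ then yields the claimed $\E[A_t^\top A_t\mid w_{t-1}] \preceq e^{-D/\sqrt t} I$.

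First I would record deterministic two-sided bounds on the coordinates of $R_{t-1}$. Using $\bar r_{t,i} \ge r_\delta$ and $\tfrac{1-\beta_1}{1-\beta_1^t}\sqrt{\tfrac{1-\beta_2^t}{1-\beta_2}} \ge 1-\beta_1$ gives $\alpha_{t,i} \ge (1-\beta_1)r_\delta$; \cref{assump:grad_bound} together with $v_0 = 0$ gives $v_{t,i} \le U^2/(1-\beta_2)$; hence $R_{t-1,i} \ge \underline R_t \coloneqq \tfrac{(1-\beta_1)r_\delta}{\sqrt t\,\sqrt{\epsilon+U^2/(1-\beta_2)}}$, so that $\lambda\,\underline R_t = 2D/\sqrt t$ with $D$ exactly as stated. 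On the other side, $w_t \in \mathcal W$ (\cref{assump:constraint}) forces $|w_{k,i}-w_{0,i}| \le B$, so $\bar r_{t,i} \le B \vee r_\delta$, and with $v_{t,i} \ge 0$ and the bias-correction factors bounded this yields $R_{t-1,i} \le \bar R_t$ with $\bar R_t = O(1/\sqrt t)$.

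For the drift, split $\diag(R_{t-1}) = \underline R_t I + \diag(R_{t-1} - \underline R_t\mathbf 1)$ with the second summand entrywise nonnegative, and split $S_t = G_{t-1} + (S_t - G_{t-1})$ with $S_t - G_{t-1} \succeq 0$, giving $u^\top\diag(R_{t-1})S_t u \ge \underline R_t\,u^\top G_{t-1} u - (\bar R_t - \underline R_t)\|S_t\|_{\mathrm{op}}$. Taking $\E[\cdot\mid w_{t-1}]$: by the tower property and \cref{assump:mixing} applied at step $t-1$ — and here only the $k=0$ term is usable, since $G_{t-1}$ depends on $\theta_{t-1}$ drawn \emph{after} $w_{t-1}$ is formed, whereas conditioning on $w_{t-1}$ would corrupt the mixing lower bound for the earlier $G$'s — one gets $\E[u^\top G_{t-1}u\mid w_{t-1}] \ge \lambda$, so the first term is $\ge \underline R_t\lambda = 2D/\sqrt t$. \cref{assump:noise} at zero lag gives $\E[G_j^\top G_j\mid\cdot] \preceq \bar\lambda I$, so $\E[\|G_j\|_{\mathrm{op}}^2\mid\cdot] \le \operatorname{tr}(\bar\lambda I) = \bar\lambda M$, whence $\E[\|S_t\|_{\mathrm{op}}^2\mid w_{t-1}] \le \bar\lambda M/(1-\beta_1)^2$ by Cauchy--Schwarz over $k$; combined with $\bar R_t - \underline R_t = O(1/\sqrt t)$ the cross term is $O(1/\sqrt t)$, and the same estimate with $\bar R_t^2 = O(1/t)$ makes the quadratic term $O(1/t)$. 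Assembling, $\E[\|A_t u\|^2\mid w_{t-1}] \le 1 - 4D/\sqrt t + C/\sqrt t + C'/t$, and for $t \ge T$ this is $\le 1 - D/\sqrt t \le e^{-D/\sqrt t}$ as soon as the cross-term constant $C$ fits inside the margin $3D$.

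The main obstacle is exactly that last step. Since $\diag(R_{t-1})$ is not a scalar multiple of the identity, and a product of a nonnegative diagonal matrix with a PSD matrix need not be sign-definite, the cross term cannot be discarded or handled coordinatewise; moreover $R_{t-1}$ is statistically entangled with the $G_{t-k-1}$ (all are functions of the chain states), so the expectation does not factor. The delicate point is therefore quantitative: showing that the guaranteed drift $\lambda\underline R_t = 2D/\sqrt t$ strictly dominates the aggregate $O(1/\sqrt t)$ error coming from the spread $\bar R_t - \underline R_t$ and from $\|S_t\|_{\mathrm{op}}$. A naive operator-norm bound on the cross term may not fit within the $3D$ margin, so the real work is to extract a sharper estimate — e.g.\ exploiting the conditional law of $R_{t-1}$ given the chain states, or cancellations across coordinates — and this is also where the specific constant in $D$ (and the value of $T$) is pinned down; the $O(1/t)$ terms only affect $T$.
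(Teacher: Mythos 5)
Your setup matches the paper's proof closely: the same expansion of $\E[A_t^\top A_t\mid w_{t-1}]$ into identity, drift, and quadratic terms; the same two-sided bounds on $R_{t-1,i}$ yielding the same constant $D$ via $\lambda\underline R_t=2D/\sqrt t$; the same $O(1/t)$ control of the quadratic term via \cref{assump:noise}; and the same endgame $1-2D/\sqrt t+D'/t\le 1-D/\sqrt t\le e^{-D/\sqrt t}$ for $t$ beyond a finite $T$. The genuine gap is exactly where you flag it, and it is not a deferrable constant-chasing issue: your additive split $\diag(R_{t-1})=\underline R_t I+\diag(R_{t-1}-\underline R_t\mathbf 1)$ leaves a cross term bounded only by $(\bar R_t-\underline R_t)\,\E[\|S_t\|_{\mathrm{op}}\mid w_{t-1}]=\Theta(1/\sqrt t)$, the same order as the drift, with a constant scaling like $B/\big((1-\beta_1)\sqrt{\epsilon(1-\beta_2)}\big)$ while $D$ scales like $\lambda(1-\beta_1)r_\delta$. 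With $r_\delta=10^{-3}$ and $B=N$ the error term dwarfs the drift, so no choice of $T$ closes the argument; the decomposition itself has to change.

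The paper avoids the cross term with a multiplicative rather than additive step: setting $A=\diag(R_{t-1})-\tfrac12(\min_iR_{t-1,i})I\succeq0$ and $Q=\sum_k\beta_1^kG_{t-k-1}\succeq0$, it invokes $\Lambda_{\min}(AQ)=\Lambda_{\min}(A^{1/2}QA^{1/2})\ge0$ to conclude $\diag(R_{t-1})Q\succeq\tfrac12(\min_iR_{t-1,i})Q$, which is what produces a clean $-2D/\sqrt t$ drift with no error term (the factor $\tfrac12$ is already baked into $D$); \cref{assump:mixing} is then applied to every summand via the tower property, not only $k=0$ (this last difference is harmless since $\sum_k\beta_1^k\ge1$). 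That said, your instinct that this is the crux is sound: the quantity entering the quadratic form is the symmetric part $\tfrac12(AQ+QA)$, and a nonnegative spectrum of $AQ$ does not in general make its symmetric part PSD (e.g.\ $A=\diag(1,100)$, $Q=\mathbf 1\mathbf 1^\top$), so the paper's own step is delicate and would need additional justification exploiting the structure of $R_{t-1}$ and the $G$'s. In short: you reproduced everything except the one idea that makes the lemma go through, you correctly diagnosed why your substitute fails, and the paper's substitute is itself worth scrutinizing.
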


\begin{proof}[Proof of \cref{lem:At}]
    We begin by expanding the matrix product
    \[
        &\E\left[ \left( I - \diag(R_{t-1})\sum_{k=0}^{t-1}\beta_1^k G_{t-k-1}\right)^\top\left( I - \diag(R_{t-1})\sum_{k=0}^{t-1}\beta_1^k G_{t-k-1}\right) \middle | w_{t-1} \right]\\
        &= I - 2\E \left[ \diag(R_{t-1})\sum_{k=0}^{t-1}\beta_1^k G_{t-k-1} \middle | w_{t-1} \right] + 
            \E\left[ \left(\sum_{k=0}^{t-1}\beta_1^k G_{t-k-1}\right)^\top\diag(R^2_{t-1})\left(\sum_{k=0}^{t-1}\beta_1^k G_{t-k-1}\right) \middle | w_{t-1} \right].\label{eq:first_lemma}
    \]
    We bound the first expectation from below and the second expectation from above.
    
    We being by bounding $\E \left[ \diag(R_{t-1})\sum_{k=0}^{t-1}\beta_1^k G_{t-k-1} \middle | w_{t-1} \right]$ from below. 
    Following the update rule as specified in \cref{sec:update_rule}, we expand the $i^{\text{th}}$ entry of $R_{t-1}$ as defined by \cref{eq:simple_notation} and get
    \[
        R_{t-1,i} = \frac{\alpha_{t,i}}{\sqrt{t(\epsilon + v_{t,i})}} 
        = \left(\frac{1-\beta_1}{1-\beta_1^t}\right)\left(\frac{\sqrt{1-\beta_2^t}}{\sqrt{1-\beta_2}}\right)\frac{1}{1-\beta_1^t}\left( (1-\beta_1)\left( \sum_{k=0}^{t-1}\beta_1^k \bar{r}_{t-k,i} \right) + 
            \beta_1^t\bar{r}_{0,i} \right)\frac{1}{\sqrt{t(\epsilon + v_{t,i})}}. \label{eq:expanded_R}
    \]
    By \cref{assump:grad_bound} and that $|\beta_2| < 1$, we can bound $v_{t,i}$ as defined in \cref{eq:v_update} by
    \[
        v_{t,i} = \sum_{k=0}^{t-1}\beta_2^k g_i^2(w_{t-k-1}, \theta_{t-k-1}) \leq U^2\sum_{k=0}^{t-1}\beta_2^k 
        \leq U^2(1-\beta_2)^{-1}.
    \]
    Together with $|\beta_1|<1$ and that $\forall t, i$, $\bar{r}_{t,i} \geq r_\delta$, 
    \[
        R_{t-1,i} \geq 
        \left(\frac{1-\beta_1}{1-\beta_1^t}\right)\left(\frac{\sqrt{1-\beta_2^t}}{\sqrt{1-\beta_2}}\right)\frac{r_\delta}{1-\beta_1^t}\frac{1}{\sqrt{t}\sqrt{\epsilon + (1-\beta_2)^{-1}U^2}}
        \geq \frac{(1-\beta_1)r_\delta}{\sqrt{t}\sqrt{\epsilon + (1-\beta_2)^{-1}U^2}}.
    \]
    As a result, for all $t$, we have that $\diag(R_{t-1}) \succeq \frac{(1-\beta_1)r_\delta}{\sqrt{t}\sqrt{\epsilon+(1-\beta_2)^{-1}U^2}}I$.

    Now let $A=\diag(R_{t-1}) - \frac{1}{2}\left(\min_{1\leq i\leq M} R_{t-1,i}\right)I$. We know A is diagonal and $A \succeq \frac{(1-\beta_1)r_\delta}{2\sqrt{t}\sqrt{\epsilon+(1-\beta_2)^{-1}U^2}}I$.
    We also know $Q \coloneqq \sum_{k=0}^{t-1}\beta_1^k G_{t-k-1} \succeq 0$ since $G_t$ are sample covariance matrices. 
    Together, using $\Lambda_{\min}$ to denote the minimum eigenvalue, we have
        $\Lambda_{\min}\left( AQ \right) = \Lambda_{\min}\left( A^\frac{1}{2}QA^\frac{1}{2} \right) \geq 0$, and so $AQ \succeq 0$.
    Therefore,
    \[
        \diag(R_{t-1})\sum_{k=0}^{t-1}\beta_1^k G_{t-k-1} \succeq \frac{1}{2}\left(\min_{1\leq i\leq M} R_{t-1,i}\right)\sum_{k=0}^{t-1}\beta_1^k G_{t-k-1} = \frac{(1-\beta_1)r_\delta}{2\sqrt{t}\sqrt{\epsilon+(1-\beta_2)^{-1}U^2}}\sum_{k=0}^{t-1}\beta_1^k G_{t-k-1}.
    \]
    Using the above, we have that
    \[
        \E \left[ \diag(R_{t-1})\sum_{k=0}^{t-1}\beta_1^k G_{t-k-1} \middle | w_{t-1} \right] 
        &\succeq \frac{(1-\beta_1)r_\delta}{2\sqrt{t}\sqrt{\epsilon+(1-\beta_2)^{-1}U^2}}\sum_{k=0}^{t-1}\beta_1^k \E\left[G_{t-k-1} \middle | w_{t-1}\right]\\
        &= \frac{(1-\beta_1)r_\delta}{2\sqrt{t}\sqrt{\epsilon+(1-\beta_2)^{-1}U^2}}\sum_{k=0}^{t-1}\beta_1^k 
            \E\left[\E\left[ G_{t-k-1} \middle| w_{t-k-1}, \theta_{t-k-2} \right] \middle | w_{t-1}\right]\\
        &\succeq \frac{\lambda(1-\beta_1)r_\delta}{2\sqrt{t}\sqrt{\epsilon+(1-\beta_2)^{-1}U^2}}\left(\sum_{k=0}^{t-1}\beta_1^k\right)I\\
        &\succeq \frac{\lambda(1-\beta_1)r_\delta}{2\sqrt{t}\sqrt{\epsilon+(1-\beta_2)^{-1}U^2}}I, \label{eq:first_term}
    \]
    where the inequalities are due to \cref{assump:mixing} and $|\beta_1|<1$.

    We now bound $\E\left[ \left(\sum_{k=0}^{t-1}\beta_1^k G_{t-k-1}\right)^\top\diag(R^2_{t-1})\left(\sum_{k=0}^{t-1}\beta_1^k G_{t-k-1}\right) \middle | w_{t-1} \right]$. We similarly begin 
    by bounding $R_{t-1,i}$ from the other direction. By \cref{assump:constraint}, $\bar{r}_{t,i} \leq B$. Together with $v_{t,i}\geq 0$, and that $|\beta_1|<1, |\beta_2|<1$, 
    we can bound \cref{eq:expanded_R} from above by
    \[
        R_{t-1,i} 
        \leq \left(\frac{1-\beta_1}{1-\beta_1^t}\right)\left(\frac{\sqrt{1-\beta_2^t}}{\sqrt{1-\beta_2}}\right)\frac{B}{1-\beta_1^t} \frac{1}{\sqrt{t}\sqrt{\epsilon}}
        \leq \frac{B}{t\epsilon(1-\beta_1)\sqrt{1-\beta_2}}.\label{eq:R_ub}
    \]
    Again using $|\beta_1|<1, |\beta_2|<1$, and squaring $R_{t-1,i}$, we have that
    \[
        \diag(R_{t-1}^2) \preceq \frac{B^2}{t\epsilon(1-\beta_1)^2(1-\beta_2)}I. \label{eq:simple_R_ub}
    \]
    Therefore,
    \[
        &\E\left[ \left(\sum_{k=0}^{t-1}\beta_1^k G_{t-k-1}\right)^\top\diag(R^2_{t-1})\left(\sum_{k=0}^{t-1}\beta_1^k G_{t-k-1}\right) \middle | w_{t-1} \right]\\
        &\preceq \frac{B^2}{t\epsilon(1-\beta_1)^2(1-\beta_2)} \E\left[ \left(\sum_{k=0}^{t-1}\beta_1^k G_{t-k-1}\right)^\top\left(\sum_{k=0}^{t-1}\beta_1^k G_{t-k-1}\right) \middle | w_{t-1} \right]\\
        &= \frac{B^2}{t\epsilon(1-\beta_1)^2(1-\beta_2)} \sum_{k=0}^{t-1}\sum_{k'=0}^{t-1}\beta_1^k \beta_1^{k'} \E\left[ G_{t-k-1}^\top G_{t-k'-1} \middle | w_{t-1}\right]\\
        &= \frac{B^2}{t\epsilon(1-\beta_1)^2(1-\beta_2)} \sum_{k=0}^{t-1}\sum_{k'=0}^{t-1}\beta_1^k \beta_1^{k'} 
        \E\left[ \E\left[ G_{t-k'-1}^\top G_{t-k-1} \middle | w_{t-k'-1}, \theta_{t-k'-2}, w_{t-k-1}, \theta_{t-k-2} \right] \middle | w_{t-1}\right]\\
        &\preceq \frac{\bar{\lambda}B^2}{t\epsilon(1-\beta_1)^2(1-\beta_2)}\left(\sum_{k=0}^{t-1}\beta_1^k\right)^2I\\
        &\preceq \frac{\bar{\lambda}B^2}{t\epsilon(1-\beta_1)^4(1-\beta_2)}I, \label{eq:second_term}
    \]  
    where the second last inequality is due to \cref{assump:noise}, and the last inequality is by $|\beta_1|<1$.
    Let $D' = \frac{\bar{\lambda}B^2}{\epsilon(1-\beta_1)^4(1-\beta_2)}$ and recall that $D = \frac{\lambda(1-\beta_1)r_\delta}{2\sqrt{\epsilon+(1-\beta_2)^{-1}U^2}}$.
    Together by \cref{eq:first_term,eq:second_term}, we can bound \cref{eq:first_lemma} by 
    \[
        \E\left[ \left( I - \diag(R_{t-1})\sum_{k=0}^{t-1}\beta_1^k G_{t-k-1}\right)^\top\left( I - \diag(R_{t-1})\sum_{k=0}^{t-1}\beta_1^k G_{t-k-1}\right) \middle | w_{t-1} \right]
        \preceq \left(1 - \frac{2D}{\sqrt{t}} + \frac{D'}{t}\right)I.
    \]
    Since $D, D' > 0$, we have for all $t\geq \frac{D'^2}{D^2}$, 
    $1 - \frac{2D}{\sqrt{t}} + \frac{D'}{t} \leq 1 - \frac{D}{\sqrt{t}} \leq \exp\left(-{\frac{D}{\sqrt{t}}}\right)$.
    Therefore, for all $t\geq \frac{D_2^2}{D_1^2}$, we have that
    \[
        \E\left[ \left( I - \diag(R_{t-1})\sum_{k=0}^{t-1}\beta_1^k G_{t-k-1}\right)^\top\left( I - \diag(R_{t-1})\sum_{k=0}^{t-1}\beta_1^k G_{t-k-1}\right) \middle | w_{t-1} \right]
        \preceq \left(\exp\left( -\frac{D}{\sqrt{t}} \right)\right)I.
    \]
\end{proof}

\begin{lemma}\label{lem:bt}
    Suppose \cref{assump:constraint,assump:exact,assump:grad_bound,assump:mixing,assump:noise} hold. We have that as $t \to \infty$,
    \[
        \E\left[\left\| \diag(R_{t-1})\sum_{k=1}^{t-1}\beta_1^k G_{t-k-1}\left( \sum_{j=1}^k \Delta_{t-j} \right)\right\|^2 \middle | w_{t-1}\right] = O\left(\frac{1}{t^2}\right).
    \]
\end{lemma}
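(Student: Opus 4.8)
The plan is to control the random vector inside the expectation by a \emph{deterministic} bound of order $t^{-1}$, after which taking the conditional expectation of its square is immediate. Write $b_t = \diag(R_{t-1})\sum_{k=1}^{t-1}\beta_1^{k} G_{t-k-1}(\sum_{j=1}^{k}\Delta_{t-j})$. By the triangle inequality and submultiplicativity of the operator norm,
\[
\|b_t\| \;\leq\; \|\diag(R_{t-1})\|_{\mathrm{op}}\,\sum_{k=1}^{t-1}\beta_1^{k}\,\|G_{t-k-1}\|_{\mathrm{op}}\,\sum_{j=1}^{k}\|\Delta_{t-j}\| .
\]
It therefore suffices to establish three facts: (i) $\|\diag(R_{t-1})\|_{\mathrm{op}} = O(t^{-1/2})$; (ii) $\|G_{t-k-1}\|_{\mathrm{op}}$ is bounded by a constant uniformly in $t$ and $k$; and (iii) $\|\Delta_{t-j}\| = O((t-j)^{-1/2})$.

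Facts (i) and (iii) are essentially contained in the proof of \cref{lem:At} and the update rule of \cref{sec:update_rule}. For (i), \cref{eq:R_ub} together with $v_{t,i}\geq 0$ gives that every entry of $R_{t-1}$ is at most $C_R\,t^{-1/2}$ with $C_R = B/((1-\beta_1)\sqrt{(1-\beta_2)\epsilon})$. For (iii), expanding $\Delta_{t-j,i} = \alpha_{t-j,i}\,m_{t-j,i}/\sqrt{(t-j)(\epsilon+v_{t-j,i})}$, I would bound $\alpha_{t-j,i}\leq B/((1-\beta_1)\sqrt{1-\beta_2})$ (using $\tilde r_{t,i}\leq B/(1-\beta_1^{t})$ by \cref{assump:constraint} and $1-\beta_1^{t}\geq 1-\beta_1$), $|m_{t-j,i}|\leq U/(1-\beta_1)$ (by unrolling \cref{eq:m_update} and applying \cref{assump:grad_bound}), and $\epsilon+v_{t-j,i}\geq\epsilon$, obtaining $\|\Delta_{t-j}\|\leq C_\Delta(t-j)^{-1/2}$ for a constant $C_\Delta$ depending on $M,B,U,\beta_1,\beta_2,\epsilon$.

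Fact (ii) is the step I expect to be the main obstacle, because \cref{assump:grad_bound} constrains only the composite gradient estimate $g$, not the matrix $G$ in isolation. The idea is to use the affine decomposition $g(w,\theta,\scS) = G(w-w^\star) + H(1-s)$ from \cref{eq:gradient} together with the fact that, under \cref{assump:constraint}, the points $0$, $w^\star$, and $B e_i$ all lie in $\mathcal W$. Evaluating \cref{assump:grad_bound} at these points and subtracting $g(w^\star,\theta,\scS) = H(1-s)$ isolates $G(\cdot - w^\star)$, forcing $\|G w^\star\|_\infty\leq 2U$ and $\|B G e_i\|_\infty\leq 4U$ for every $i$; since $G$ depends only on $\theta\in\Theta^K$, every entry of $G_{t-k-1}$ is then bounded by $4U/B$, whence $\|G_{t-k-1}\|_{\mathrm{op}}\leq\|G_{t-k-1}\|_F\leq 4UM/B =: L$.

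Combining the three facts, $\|b_t\|\leq C_R L C_\Delta\,t^{-1/2}\sum_{k=1}^{t-1}\beta_1^{k}\sum_{j=1}^{k}(t-j)^{-1/2}$. Since $t-j\geq t-k$ for $j\leq k$, the inner sum is at most $k(t-k)^{-1/2}$; splitting the outer sum at $k=\lfloor t/2\rfloor$, the head is at most $\sqrt{2/t}\,\sum_{k\geq 1}k\beta_1^{k} = O(t^{-1/2})$ and the tail at most $(1-\beta_1)^{-1}\,t\,\beta_1^{\lfloor t/2\rfloor} = o(t^{-1/2})$. Hence $\|b_t\| = O(t^{-1})$ deterministically, so $\E[\|b_t\|^2\mid w_{t-1}] = O(t^{-2})$, as claimed. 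One point to be careful about: fact (iii) must retain the $(t-j)^{-1/2}$ decay of the step sizes, not merely the boundedness of the iterates — replacing $\sum_{j\leq k}\|\Delta_{t-j}\|$ by $\operatorname{diam}(\mathcal W)$ would only give $\|b_t\| = O(t^{-1/2})$ and hence $O(t^{-1})$ for the square, which is too weak.
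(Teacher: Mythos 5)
Your proposal is correct, and it reaches the $O(t^{-2})$ rate by a genuinely different route on the one step you correctly identified as the crux. The paper never bounds $\|G_{t-k-1}\|_{\mathrm{op}}$ deterministically: it expands the squared norm into a double sum over $k,k'$, conditions on $(w_{t-k-1},\theta_{t-k-2},w_{t-k'-1},\theta_{t-k'-2})$, and invokes \cref{assump:noise} to get $\E[G_{t-k-1}^\top G_{t-k'-1}\mid\cdots]\preceq\bar{\lambda}I$, keeping the whole argument in expectation. You instead exploit the affine structure $g(w,\theta,\mathcal{S})=G(w-w^\star)+H(1-s)$ (valid by \cref{assump:exact}) together with the geometry of $\mathcal{W}$: evaluating \cref{assump:grad_bound} at $0$, $w^\star$, and $Be_i$ and differencing isolates $G$ and yields a pathwise bound $\|G\|_{\mathrm{op}}\leq 4UM/B$ (in fact $g(Be_i,\theta,\mathcal{S})-g(0,\theta,\mathcal{S})=BGe_i$ gives $2U/B$ per entry directly, without routing through $w^\star$). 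This is a nice observation: it makes your bound on $\|b_t\|$ hold surely rather than in expectation and renders \cref{assump:noise} unnecessary for this lemma. What the paper's route buys in exchange is a constant depending on $\bar{\lambda}$ rather than on $(UM/B)^2$, and an argument that would survive if $G$ were only second-moment-bounded rather than uniformly bounded. Your remaining ingredients — the $O(t^{-1/2})$ bound on the entries of $R_{t-1}$ via \cref{eq:R_ub}, the bound $\|\Delta_{t-j}\|\leq C_\Delta(t-j)^{-1/2}$, and the summation estimate $\sum_{k=1}^{t-1}\beta_1^k\sum_{j=1}^k(t-j)^{-1/2}=O(t^{-1/2})$ via a split at $\lfloor t/2\rfloor$ — coincide with the paper's (the paper swaps the order of summation where you bound the inner sum by $k(t-k)^{-1/2}$; both work). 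Your closing caveat about needing the $(t-j)^{-1/2}$ decay of the increments, rather than mere boundedness of the iterates, is exactly the point the paper's proof also hinges on.
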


\begin{proof}[Proof of \cref{lem:bt}]
    We begin by expanding the norm
    \[
        &\E\left[\left\| \diag(R_{t-1})\sum_{k=1}^{t-1}\beta_1^k G_{t-k-1}\left( \sum_{j=1}^k \Delta_{t-j} \right)\right\|^2 \middle | w_{t-1}\right]\\
        &\leq \E\left[ \left(\max_{1\leq i\leq M}R_{t-1,i}\right)^2 \left\| \sum_{k=1}^{t-1}\beta_1^k G_{t-k-1}\left( \sum_{j=1}^k \Delta_{t-j} \right)\right\|^2 \middle | w_{t-1}\right]\\
        &\leq \frac{B^2}{t\epsilon(1-\beta_1)^2(1-\beta_2)} \E\left[\sum_{k,k'=1}^{t-1} \beta_{1}^{k+k'} \left(\sum_{j=1}^k \Delta_{t-j} \right)^\top G_{t-k-1}^\top G_{t-k'-1}  \left(\sum_{j=1}^{k'} \Delta_{t-j} \right) \middle | w_{t-1}\right] \\
        &= \frac{B^2}{t\epsilon(1-\beta_1)^2(1-\beta_2)} \E\left[\sum_{k,k'=1}^{t-1} \beta_{1}^{k+k'} \left(w_{t-k-1}-w_{t-1} \right)^\top G_{t-k-1}^\top G_{t-k'-1}  \left(w_{t-k'-1}-w_{t-1} \right) \middle | w_{t-1}\right],\label{eq:intermediate}
    \]
    where the second inequality is by \cref{eq:simple_R_ub}, and the last equality follows after writing $w_{t-k-1}-w_{t-1}$ as a telescoping sum. 
    Using \cref{assump:noise}, we can bound the expectation in \cref{eq:intermediate} as follows:
    \[
        &\E\left[\sum_{k,k'=1}^{t-1} \beta_{1}^{k+k'} \left(w_{t-k-1}-w_{t-1} \right)^\top G_{t-k-1}^\top G_{t-k'-1}  \left(w_{t-k'-1}-w_{t-1} \right) \middle | w_{t-1}\right]\\
        &= \E\left[\sum_{k,k'=1}^{t-1} \beta_{1}^{k\!+\!k'} \left(w_{t-k-1}\!-\!w_{t-1} \right)^\top \E\left[G_{t-k-1}^\top G_{t-k'-1} \middle | w_{t-k-1}, \theta_{t-k-2}, w_{t-k'-1}, \theta_{t-k'-2} \right]
            \left(w_{t-k'-1}\!-\!w_{t-1} \right) \middle | w_{t-1}\right]\\
        &\leq \bar{\lambda}\E\left[\sum_{k,k'=1}^{t-1} \beta_{1}^{k\!+\!k'} \left(w_{t-k-1}\!-\!w_{t-1} \right)^\top \left(w_{t-k'-1}\!-\!w_{t-1} \right) \middle | w_{t-1}\right]\\
        &= \bar{\lambda}\E\left[\sum_{k,k'=1}^{t-1} \beta_{1}^{k\!+\!k'} \left( \sum_{j=1}^k \Delta_{t-j} \right)^\top \left( \sum_{j=1}^{k'} \Delta_{t-j} \right) \middle | w_{t-1}\right].
    \]
    Therefore,
    \[
        &\E\left[\left\| \diag(R_{t-1})\sum_{k=1}^{t-1}\beta_1^k G_{t-k-1}\left( \sum_{j=1}^k \Delta_{t-j} \right)\right\|^2 \middle | w_{t-1} \right]\\
        &\leq \frac{\bar{\lambda}B^2}{t\epsilon(1-\beta_1)^2(1-\beta_2)}\E\left[\sum_{k,k'=1}^{t-1} \beta_{1}^{k\!+\!k'} \left( \sum_{j=1}^k \Delta_{t-j} \right)^\top \left( \sum_{j=1}^{k'} \Delta_{t-j} \right) \middle | w_{t-1}\right]\\
        &\leq \frac{\bar{\lambda}B^2}{t\epsilon(1-\beta_1)^2(1-\beta_2)}\E\left[\sum_{k,k'=1}^{t-1} \beta_{1}^{k+k'} \left\| \sum_{j=1}^k \Delta_{t-j} \right\| \left\| \sum_{j=1}^{k'} \Delta_{t-j} \right\| \middle | w_{t-1} \right].
    \]
    We now bound $\left\| \sum_{j=1}^k \Delta_{t-j} \right\|^2$.
    \[
        \left\| \sum_{j=1}^k \Delta_{t-j} \right\|^2 
        = \sum_{j,j'=1}^k \Delta_{t-j}^\top\Delta_{t-j'}
        \leq \sum_{j,j'=1}^k \|\Delta_{t-j}\|\|\Delta_{t-j'}\|.
    \]  
    By \cref{eq:simple_notation,eq:m_update}, we can write
    \[
        \|\Delta_{t-j}\|^2 &= \sum_{i=1}^M R_{n-j,i}^2 m^2_{n-j,i}\\
        &\leq \frac{B^2}{(t-j)\epsilon(1-\beta_1)^2(1-\beta_2)} \sum_{i=1}^M m_{n-j,i}^2\\
        &= \frac{B^2}{(t-j)\epsilon(1-\beta_1)^2(1-\beta_2)} \sum_{i=1}^M \left( \sum_{k=0}^{t-j-1}\beta_1^k g_i(w_{t-j-k-1}, \theta_{t-j-k-1}) \right)^2\\
        &\leq \frac{U^2B^2}{(t-j)\epsilon(1-\beta_1)^2(1-\beta_2)} \sum_{i=1}^M \left( \sum_{k=0}^{t-j-1}\beta_1^k \right)^2\\
        &\leq \frac{U^2B^2M}{(t-j)\epsilon(1-\beta_1)^4(1-\beta_2)},
    \]  
    where the first inequality is by \cref{eq:R_ub}, and the second inequality by \cref{assump:grad_bound}, and the last inequality by $|\beta_1|<1$.
    Let $D_1 = \frac{\bar{\lambda}B^2}{\epsilon(1-\beta_1)^2(1-\beta_2)}$, $D_2 = \frac{U^2B^2M}{\epsilon(1-\beta_1)^4(1-\beta_2)}$, we have
    \[
        &\E\left[\left\| \diag(R_{t-1})\sum_{k=1}^{t-1}\beta_1^k G_{t-k-1}\left( \sum_{j=1}^k \Delta_{t-j} \right)\right\|^2 \middle | w_{t-1}\right]\\
        &\leq \frac{D_1}{t}\sum_{k,k'=1}^{t-1}\beta_{1}^{k+k'}\sum_{j,j'=1}^{k}\frac{\sqrt{D_2}}{\sqrt{t-j}}\frac{\sqrt{D_2}}{\sqrt{t-j'}}
        = \frac{D_1D_2}{t}\left( \sum_{k=1}^{t-1}\beta_1^k \sum_{j=1}^{k}\frac{1}{\sqrt{t-j}} \right)^2.
    \]
    If we can show that, as $t\to\infty$, $S(t)\coloneqq \sum_{k=1}^{t-1}\beta_1^k\sum_{j=1}^k \frac{1}{\sqrt{t-j}} = O\left( \frac{1}{\sqrt{t}} \right)$, 
    then we have, as $t\to\infty$, $\frac{D_1D_2}{t}\left( \sum_{k=1}^{t-1}\beta_1^k \sum_{j=1}^{k}\frac{1}{\sqrt{t-j}} \right)^2 = O\left(\frac{1}{t^2}\right)$,
    thus concluding the proof.
    We now show that $S(t) = O\left( \frac{1}{\sqrt{t}} \right)$ as $t\to\infty$.
    \[
        S(t) = \sum_{k=1}^{t-1}\beta_1^k \sum_{j=1}^{k}\frac{1}{\sqrt{t-j}}
        = \sum_{j=1}^{t-1}\sum_{k=j}^{t-1}\beta_1^k \frac{1}{\sqrt{t-j}}
        = \sum_{j=1}^{t-1}\frac{1}{\sqrt{t-j}} \sum_{k=j}^{t-1}\beta_1^k
        = \sum_{j=1}^{t-1}\frac{1}{\sqrt{t-j}} \frac{\beta_1^j(1-\beta_1^{t-j})}{1-\beta_1}.
    \]
    We decompose the above into two sums to get
    \[
        S(t)
        = \frac{1}{1\!-\!\beta_1}\sum_{j=1}^{t-1}\frac{\beta^j_1}{\sqrt{t-j}} - \frac{1}{1-\beta_1}\sum_{j=1}^{t-1}\frac{\beta_1^j\beta_1^{t-j}}{\sqrt{t-j}}
        = \frac{1}{1\!-\!\beta_1}\sum_{j=1}^{t-1}\frac{\beta^j_1}{\sqrt{t-j}} - \frac{\beta_1^t}{1-\beta_1}\sum_{j=1}^{t-1}\frac{1}{\sqrt{t-j}}
        \leq \frac{1}{1-\beta_1}\sum_{j=1}^{t-1}\frac{\beta^j_1}{\sqrt{t-j}}.
    \]
    Splitting the sum above at $\lfloor t/2 \rfloor$, we get that
    \[
        S(t) = \frac{1}{1-\beta_1}\sum_{j=1}^{\lfloor t/2 \rfloor} \frac{\beta_1^j}{\sqrt{t-j}} + 
                \frac{1}{1-\beta_1}\sum_{j=\lfloor t/2 \rfloor +1}^{t-1} \frac{\beta_1^j}{\sqrt{t-j}}.
    \]
    In the first sum, since $j\leq \lfloor t/2 \rfloor$, we know $t-j\geq t - \frac{t}{2} = \frac{t}{2}$. Then 
    \[
        \frac{1}{1-\beta_1}\sum_{j=1}^{\lfloor t/2 \rfloor} \frac{\beta_1^j}{\sqrt{t-j}} \leq \frac{1}{1-\beta_1}\frac{\sqrt{2}}{\sqrt{t}}\sum_{j=1}^{\lfloor t/2 \rfloor}\beta_1^j \leq \frac{\beta_1\sqrt{2}}{(1-\beta_1)^2\sqrt{t}}.
    \]
    In the second sum, since $\lfloor t/2 \rfloor + 1 \leq j \leq t-1$, we know $t-j\geq 1$. Then
    \[
        \frac{1}{1-\beta_1}\sum_{j=\lfloor t/2 \rfloor +1}^{t-1} \frac{\beta_1^j}{\sqrt{t-j}} \leq \frac{1}{1-\beta_1} \sum_{j=\lfloor t/2 \rfloor +1}^{t-1}\beta_1^j \leq 
        \frac{1}{1-\beta_1} \sum_{j=\lfloor t/2 \rfloor +1}^{\infty}\beta_1^j \leq \frac{\beta_1^{\lfloor t/2 \rfloor +1}}{(1-\beta_1)^2}.
    \]
    Since $|\beta_1|<1$, $\beta_1^{\lfloor t/2 \rfloor +1}$ decays faster than $\frac{1}{\sqrt{t}}$ as $t\to\infty$. 
    Therefore, we have that, as $t\to\infty$, $S(t) = O\left( \frac{1}{\sqrt{t}} \right)$.
\end{proof}

\begin{lemma}\label{lem:ct}
    Suppose \cref{assump:constraint,assump:exact,assump:grad_bound,assump:mixing,assump:noise} hold. We have that as $t \to \infty$,
    \[
        \E\left[\left\| \diag(R_{t-1})\sum_{k=0}^{t-1}\beta_1^k H_{t-k-1}(1-s_{t-k-1}) \right\|^2\right] = O\left( \frac{1}{t} \right).
    \]
\end{lemma}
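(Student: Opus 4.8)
The plan is to reduce the claim to two things that are already in hand: the deterministic upper bound on the learning-rate matrix $R_{t-1}$ from the proof of \cref{lem:At}, and the observation that the subsampling-noise term $H_{t-k-1}(1-s_{t-k-1})$ is itself a stochastic gradient evaluated at the optimal weights, hence uniformly bounded by \cref{assump:grad_bound}.

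First I would strip off the learning rate. Since $\diag(R_{t-1}^2) \preceq \frac{B^2}{t\epsilon(1-\beta_1)^2(1-\beta_2)}I$ almost surely --- this is \cref{eq:simple_R_ub}, derived inside the proof of \cref{lem:At} --- we have
\[
    \left\| \diag(R_{t-1})\sum_{k=0}^{t-1}\beta_1^k H_{t-k-1}(1-s_{t-k-1}) \right\|^2 \leq \frac{B^2}{t\epsilon(1-\beta_1)^2(1-\beta_2)}\left\| \sum_{k=0}^{t-1}\beta_1^k H_{t-k-1}(1-s_{t-k-1}) \right\|^2 ,
\]
so it suffices to bound the remaining norm by a constant free of $t$.

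The key step is that $G_{t-k-1}$ and $H_{t-k-1}$ in \cref{eq:gradient} depend only on $\theta_{t-k-1}$, not on the weights, so evaluating \cref{eq:gradient} at $w = w^\star$ kills the $G$-term and gives $H_{t-k-1}(1-s_{t-k-1}) = g(w^\star, \theta_{t-k-1}, \mathcal{S}_{t-k-1})$. Since $w^\star \in \mathcal{W}$ by \cref{assump:exact}, \cref{assump:grad_bound} gives $\|H_{t-k-1}(1-s_{t-k-1})\|_\infty \leq U$, hence $\|H_{t-k-1}(1-s_{t-k-1})\| \leq \sqrt{M}\,U$ almost surely. The triangle inequality together with $|\beta_1|<1$ then yields
\[
    \left\| \sum_{k=0}^{t-1}\beta_1^k H_{t-k-1}(1-s_{t-k-1}) \right\| \leq \sqrt{M}\,U \sum_{k=0}^{t-1}|\beta_1|^k \leq \frac{\sqrt{M}\,U}{1-\beta_1} ,
\]
so the whole quantity is at most $\frac{MB^2U^2}{t\epsilon(1-\beta_1)^4(1-\beta_2)}$ almost surely, and taking expectations gives the claimed $O(1/t)$ bound.

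I expect essentially no obstacle here: the heavy lifting is front-loaded into \cref{lem:At} (the $R_{t-1}$ bound) and into \cref{assump:exact,assump:grad_bound}, and the only identity to verify is that the gradient estimate is affine in $w$ with intercept $H_{t-k-1}(1-s_{t-k-1})$ at $w^\star$, which is exactly \cref{eq:gradient} read at $w=w^\star$. If a slightly tighter constant is wanted, one can instead expand $\| \sum_{k}\beta_1^k H_{t-k-1}(1-s_{t-k-1}) \|^2$ as a double sum and observe that all off-diagonal terms vanish in expectation: for $k'<k$, condition on every source of randomness except the subsample $\mathcal{S}_{t-k'-1}$; since each $\theta_j$, and hence each $H_j$, is a function only of $\mathcal{S}_0,\dots,\mathcal{S}_{j-1}$, the factors $H_{t-k-1}$, $s_{t-k-1}$ and $H_{t-k'-1}$ are all measurable with respect to this conditioning, while $1-s_{t-k'-1}$ has conditional mean zero by unbiased subsampling, so only the $k=k'$ terms --- each at most $MU^2$ --- survive, yielding the bound $\frac{MB^2U^2}{t\epsilon(1-\beta_1)^2(1-\beta_2)(1-\beta_1^2)}$. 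The only care needed in that variant is keeping the filtration bookkeeping straight.
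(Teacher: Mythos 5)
Your proof is correct, and your main route differs from the paper's in a way worth noting. The paper expands the squared norm as a double sum over $k,k'$, kills the off-diagonal terms in expectation (unbiased subsampling plus independence of the subsamples across iterations), and bounds each diagonal term by $\E\big[\|H_{t-k-1}(1-s_{t-k-1})\|^2\big]\leq MU^2$ via a Pythagoras-type identity: the cross term in $\E\|g\|^2 = \E\big[\|G(w-w^\star)\|^2\big]+\E\big[\|H(1-s)\|^2\big]$ vanishes by unbiasedness, so the $H$-part is dominated by $\E\|g\|^2\leq MU^2$. You instead observe that $H_{t-k-1}(1-s_{t-k-1})=g(w^\star,\theta_{t-k-1},\mathcal{S}_{t-k-1})$ by reading \cref{eq:gradient} at $w=w^\star$, so \cref{assump:grad_bound} applied at $w^\star\in\mathcal{W}$ gives the \emph{almost-sure} bound $\|H_{t-k-1}(1-s_{t-k-1})\|\leq\sqrt{M}U$, after which a plain triangle inequality and the geometric series finish the job. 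Your observation is cleaner and strictly stronger (pointwise rather than in expectation), and it sidesteps the cross-term bookkeeping entirely; the price is a constant worse by a factor of roughly $(1+\beta_1)/(1-\beta_1)$, which is irrelevant to the $O(1/t)$ claim. Your ``tighter'' variant at the end is essentially the paper's proof, and your filtration argument for the vanishing off-diagonal terms is sound (condition on everything except the later subsample $\mathcal{S}_{t-k'-1}$, which is drawn after $\theta_{t-k'-1}$ and has conditional mean $\mathbf{1}$), modulo the minor imprecision that the $\theta_j$ also depend on the Markov kernels' own randomness, not only on past subsamples --- this does not affect the conclusion. Two cosmetic points: the geometric sum should be written $\sum_k|\beta_1|^k\leq(1-|\beta_1|)^{-1}$ since the paper only assumes $|\beta_1|<1$, and it is worth flagging that the identity $H(1-s)=g(w^\star,\cdot,\cdot)$ already leans on \cref{assump:exact}, which is what makes the decomposition \cref{eq:gradient} valid in the first place.
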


\begin{proof}[Proof of \cref{lem:ct}]
    We begin by expanding the norm
    \[
        &\E\left[\left\| \diag(R_{t-1})\sum_{k=0}^{t-1}\beta_1^k H_{t-k-1}(1-s_{t-k-1}) \right\|^2\right]\\
        &= \sum_{k,k'=0}^{t-1}\beta_{1}^{k+k'} \E\left[ \left(H_{t-k-1}(1-s_{t-k-1})\right)^\top
                                                    \diag(R_{t-1}^2)
                                                    \left(H_{t-k'-1}(1-s_{t-k'-1})\right)
        \right]\\
        &\leq \frac{B^2}{t\epsilon(1-\beta_1)^2(1-\beta_2)} \sum_{k,k'=0}^{t-1}\beta_1^{k+k'}\E\left[ (1-s_{t-k-1})^\top H_{t-k-1}^\top H_{t-k'-1} (1-s_{t-k'-1}) \right]\\
        &= \frac{B^2}{t\epsilon(1-\beta_1)^2(1-\beta_2)} \sum_{k=0}^{t-1}\beta_1^{2k}\E\left[ (1-s_{t-k-1})^\top H_{t-k-1}^\top H_{t-k-1} (1-s_{t-k-1}) \right]\\
        &= \frac{B^2}{t\epsilon(1-\beta_1)^2(1-\beta_2)} \sum_{k=0}^{t-1}\beta_1^{2k} \E\left[\left\|H_{t-k-1} (1-s_{t-k-1})\right\|^2\right].
    \]
    In the above, the inequality is by \cref{eq:simple_R_ub}; the second last equality is due to unbiased subsampling and that when $k\neq k'$, $s_{t-k-1} \indep s_{t-k'-1}$.
    If we can show $\forall t$, $\E\left[\left\|H_{t} (1-s_{t})\right\|^2\right]$ is uniformly bounded above by some constant $C$, then we have
    \[
        &\E\left[\left\| \diag(R_{t-1})\sum_{k=0}^{t-1}\beta_1^k H_{t-k-1}(1-s_{t-k-1}) \right\|^2\right]\\
        &\leq \frac{CB^2}{t\epsilon(1-\beta_1)^2(1-\beta_2)} \sum_{k=0}^{t-1} \beta_1^{2k}\\
        &\leq \frac{CB^2}{t\epsilon(1-\beta_1)^2(1-\beta_2)}\frac{1}{1-\beta_1^2},
    \]
    where the last line is by $|\beta_1|<1$. We can therefore conclude as $t\to\infty$,
    $\E\!\left[\left\| \diag(\!R_{t-1}\!)\sum_{k=0}^{t-1}\beta_1^k H_{t-k-1}(\!1\!-\!s_{t\!-\!k\!-\!1}\!) \right\|^2\right] = O\left(\frac{1}{t}\right)$.
    It now remains to show that $\forall t$, $\E\left[\left\|H_{t} (1-s_{t})\right\|^2\right]$ is uniformly bounded above by a constant.

    By \cref{eq:gradient}, we have that
    \[
        \E\left[\left\|g(w_t, \theta_t, \mathcal{S}_t)\right\|^2\right]
        &= \E\left[ \left\| G_t (w_t - w^\star) \right\|^2 + \left\| H_t(1-s_t) \right\|^2 + 2(w_{t}-w^\star)^\top G_{t}^\top H_t(1-s_t) \right]\\
        &= \E\left[ \left\| G_t (w_t - w^\star) \right\|^2 + \left\| H_t(1-s_t) \right\|^2 + 2(w_{t}-w^\star)^\top G_{t}^\top H_t\E\left[(1-s_t) \middle | w_t, \theta_t \right] \right]\\
        &= \E\left[ \left\| G_t (w_t - w^\star) \right\|^2 + \left\| H_t(1-s_t) \right\|^2\right],
    \]
    where the last equality is due to unbiased subsampling. Together with \cref{assump:grad_bound}, we have
    \[
        \E\left[ \left\| H_t(1-s_t) \right\|^2 \right] &\leq \E\left[ \left\| G_t (w_t - w^\star) \right\|^2 + \left\| H_t(1-s_t) \right\|^2\right]
        \leq \E\left[\left\|g(w_t, \theta_t, \mathcal{S}_t)\right\|^2\right]
        \leq MU^2,
    \]
    thus concluding the proof.
\end{proof}

\end{document}